\def\eqref#1{equation~\ref{#1}}
\def\1{\bm{1}}
\DeclareMathAlphabet{\mathsfit}{\encodingdefault}{\sfdefault}{m}{sl}
\SetMathAlphabet{\mathsfit}{bold}{\encodingdefault}{\sfdefault}{bx}{n}
\DeclareMathOperator{\Tr}{Tr}
\newtheorem{theorem}{Theorem}
\newtheorem{lemma}{Lemma}
\newtheorem{definition}{Definition}
\newtheorem{prop}{Proposition}
\DeclareMathOperator{\pr}{Pr}
\DeclareMathOperator{\Span}{span}
\DeclareMathOperator{\TFIM}{TFIM}
\DeclareMathOperator{\MC}{MC}
\DeclareMathOperator{\eff}{eff}
\DeclareMathOperator{\poly}{poly}
\DeclareMathOperator{\prune}{pr}
\DeclareMathOperator{\SPAN}{span}
\DeclareMathOperator{\CNOT}{CNOT}
\title{Symmetric Pruning in Quantum Neural Networks}
\author{Xinbiao Wang$^{1,2}$, Junyu Liu$^{3,4,5,6}$, Tongliang Liu$^7$, Yong Luo$^{1,8,9}$, Yuxuan Du$^{2,\dagger}$, Dacheng Tao$^{2}$\vspace{1mm} \\
\small{$^1$Institute of Artificial Intelligence, School of Computer Science, Wuhan University, China}  \quad
        \small{$^2$JD Explore Academy}\\
\small{$^3$Pritzker School of Molecular Engineering, The University of Chicago} \quad  \small{$^4$Chicago Quantum Exchange} \\
\small{$^5$Kadanoff Center for Theoretical Physics}\quad
        \small{$^6$qBraid Co.} \quad         \small{$^7$The University of Sydney, Australia, \quad $^8$National } \\ \small{Engineering Research Center for Multimedia Software, School of Computer Science, Institute of Artificial} \\
      \small{ Intelligence and Hubei Key Laboratory of Multimedia and Network Communication Engineering, Wuhan University,} \\
        \small{China\quad $^9$Hubei Luojia Laboratory, Wuhan, China \quad $\dagger$: corresponding author}
}
\begin{document}

	\maketitle
	
	\begin{abstract}
		Many fundamental properties of a quantum system are captured by its Hamiltonian and ground state. Despite the significance,  ground states preparation (GSP) is  classically intractable for most large-scale Hamiltonians. Quantum neural networks (QNNs), which exert the power of modern quantum machines, have emerged as a leading protocol to conquer this issue. As such, the performance enhancement of QNNs becomes the core in GSP. Empirical evidence showed that QNNs with handcraft symmetric ans\"atze generally experience better trainability than those with asymmetric ans\"atze, while theoretical explanations remain vague. To fill this knowledge gap, here we propose the effective quantum neural tangent kernel (EQNTK) and connect this concept with over-parameterization theory to quantify the convergence of QNNs towards the global optima. We uncover that the advance of symmetric ans\"atze attributes to their  large EQNTK value with low effective dimension, which requests few parameters and quantum circuit depth to reach the over-parameterization regime permitting a benign loss landscape and fast convergence. Guided by EQNTK, we further devise a symmetric pruning (SP) scheme to automatically tailor a symmetric ansatz from an over-parameterized and asymmetric one to greatly improve the performance of QNNs when the explicit symmetry information of Hamiltonian is unavailable. Extensive numerical simulations are conducted to validate the analytical results of EQNTK and the effectiveness of SP. 
	\end{abstract}
	
	\section{Introduction}
	\begin{wrapfigure}{R}{0.39\textwidth}
		\begin{center} 
			\includegraphics[width=0.39\textwidth]{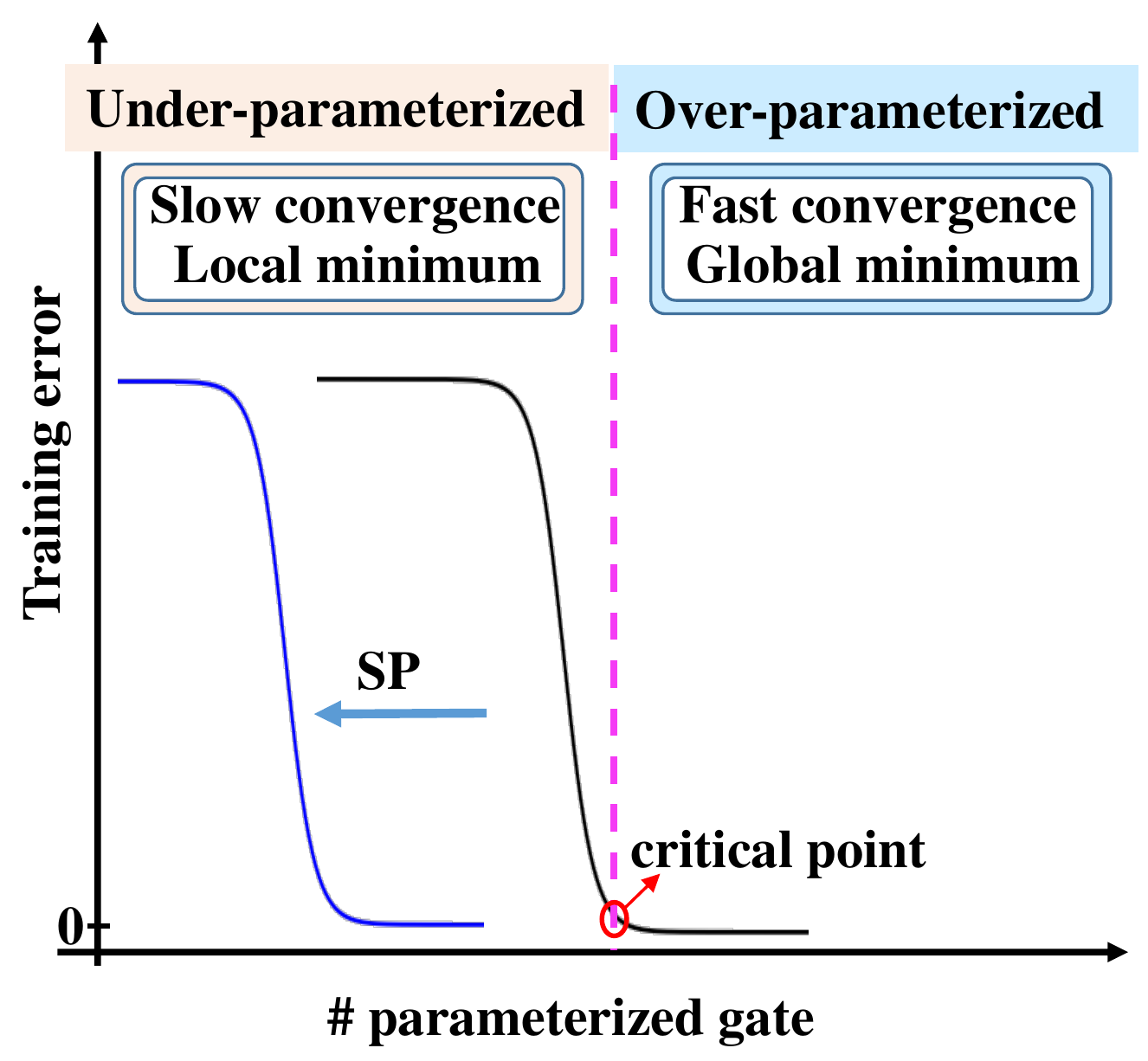}
			\caption{\small{ \textbf{The critical point of the over-parameterized regime.}  When the number of parameters is beyond the critical point (the red circle), the training error exponentially converges to a nearly global minimum. Symmetric ans\"atze (the blue curve) require few parameters to reach the critical point over the asymmetric ans\"atze. }}
			\label{fig:schematic_critical}
		\end{center}
	\end{wrapfigure}
	
	The law of quantum mechanics advocates that any quantum system can be described by a Hamiltonian, and many important physical properties are reflected by its ground state. For this reason, the ground state preparation (GSP) of Hamiltonians is the key to understanding and fabricating novel quantum matters. Due to the intrinsic hardness of GSP \citep{poulin2009preparing,carleo2019machine}, the required computational resources of classical methods are unaffordable when the size of Hamiltonian becomes large. Quantum computers, whose operations can harness the strength of quantum mechanics, promise to tackle this problem with potential computational merits. In the noisy intermediate-scale quantum (NISQ) era \citep{preskill2018quantum}, quantum neural networks (QNNs)  \citep{farhi2018classification,cong2019quantum,cerezo2021variational_VQA}  are leading candidates toward this goal. The building blocks of QNNs, analogous to deep neural networks, consist of variational ans\"atze (also called parameterized quantum circuits) and classical optimizers. In order to enhance the power of QNNs in GSP, great efforts have been made to design advanced ans\"atze with varied circuit structures \citep{peruzzo2014variational,wecker2015progress,kandala2017hardware}.  
	
	Despite the achievements aforementioned, recent progress has shown that QNNs may suffer from severe trainability issues when the circuit depth of ans\"atze is either shallow or deep. Namely, for the deep ans\"atze, the magnitude of the gradients exponentially decays with the increased system size \citep{mcclean2018barren, cerezo2021cost}. This phenomenon, dubbed the barren plateau, hints at the difficulty of optimizing deep QNNs, where an exponential runtime is necessitated for convergence. The wisdom to alleviate barren plateaus is exploiting shallow ans\"atze to accomplish learning tasks  \citep{grant2019initialization,skolik2021layerwise,zhang2020toward, pesah2021absence}, while the price to pay is incurring another serious trainability issue—convergence \citep{boyd2004convex,du2021learnability}. The trainable parameters may get stuck into sub-optimal local minima or saddle points with high probability because of the unfavorable loss landscape \citep{anschuetz2021critical, anschuetz2022beyond}. Orthogonal to these negative results, several studies pointed out that when the depth of ans\"atze becomes overwhelmingly deep and surpasses a critical point, the over-parameterized QNNs embrace a benign landscape and permit fast convergence towards good local minima  \citep{kiani2020learning,wiersema2020exploring,larocca2021theory}. Nevertheless, the criteria to reach such a critical point is stringent, i.e., the number of parameterized gates or the circuit depth scales exponentially with the problem size, which hurdles the application of over-parameterized QNNs in practice.
	
	Empirical evidence sheds new light on exploiting over-parameterized QNNs to tackle GSP. QNNs with symmetric ans\"atze only demand a polynomial number of trainable parameters and the circuit depth with the problem size to reach the over-parameterized region and achieve a fast convergence rate  \citep{herasymenko2021diagrammatic,gard2020efficient,zheng2021speeding,zheng2021speeding2,shaydulin2021exploiting,mernyei2022equivariant, marvian2022restrictions,meyer2022exploiting,larocca2022group,sauvage2022building}. A common feature of these symmetric ans\"atze is capitalizing on the symmetric properties underlying the problem Hamiltonian to shrink the solution space and facilitate seeking near-optimal solutions. Unfortunately, current symmetric ans\"atze are inapplicable to a broad class of Hamiltonians whose symmetry is implicit, since their constructions rely on the explicit information for the symmetry of Hamiltonians. Besides, it is unknown whether the symmetry contributes to lowering the critical point to reach the over-parameterization regime.
	
	Here we fill the above knowledge gap from both theoretical and practical aspects. Concretely, we develop a novel notion---effective quantum neural tangent kernel (EQNTK) to capture the training dynamic of various ans\"atze via their \textbf{effective dimension}. In doing so, we expose that compared with the asymmetric ans\"atze, the symmetric ans\"atze possess dramatically \textit{lower effective dimensions} and the required number of parameters and circuit depth to reach the over-parameterization may polynomially scale with the problem size (see Fig.~\ref{fig:schematic_critical} for an intuition). By leveraging EQNTK, we next  prove that when the condition of over-parameterization is satisfied, the trainable parameters of QNNs with symmetric ans\"atze can \textbf{exponentially converge} to the global optima with the increased iterations. Taken together, our analysis recognizes that over-parameterized QNNs with symmetric ans\"atze is a possible solution toward large-scale GSP tasks. Envisioned by EQNTK and pruning techniques in deep neural networks \citep{han2015deep, blalock2020state, frankle2020pruning, wang2022quantumnas}, we further devise a \textbf{symmetric pruning scheme (SP)} to \textit{automatically tailor a symmetric ansatz from an over-parameterized and asymmetric one} with the enhanced trainability and applicability. Conceptually, SP continuously eliminates the redundant quantum gates from the given asymmetric ansatz and correlates parameters to assign different types of symmetries on the slimmed ansatz. In this way, SP generates a modest-depth symmetric ansatz with a fast convergence guarantee and thus improves the hardware efficiency. Extensive  simulations on many-body physics and combinatorial problems validate the theory of EQNTK and the efficacy of SP. These results deepen our understating about how to merge symmetry with over-parameterization theory and indicate the signification of designing symmetric ans\"atze.

	\textbf{Contributions.} We summarize our main contributions as follows:
	\begin{enumerate}
		\item We propose the notion of EQNTK to quantify the   training dynamics of QNNs with symmetric ans\"atze, which reconciles \textbf{the QNTK theory with the symmetry of the problem Hamiltonian}  (see Sec.~\ref{subsec:EQNTK}). As shown in Fig.~\ref{fig:schematic_critical}, since the training dynamic between symmetric and asymmetric ans\"atze is evidently disparate, our results provide a deep  understanding towards QNNs with symmetric ans\"atze, especially for unraveling how the structure information effects the convergence rate.  
		
		\item Our key \textbf{technical contribution} is achieving a tighter convergence bound of QNNs with various symmetric ans\"atze (see Theorem~\ref{thm:EQNTK} and Lemma~\ref{lem:DLA_EQTNK}). Particularly, our bound yields $\gamma=O(\poly(LK, {d_{\eff}}^{-1}))$, where   $LK$ is the number of parameters and $d_{\eff}$ is the effective dimension. The comparison with prior results is summarized in Table~\ref{tab:results_summary}. Our results not only greatly reduce the threshold in reaching over-parameterization   but promise an improved convergence rate. These two conclusions are indispensable in applying over-parameterized QNNs to solve practical problems.

        \begin{table}[h]
            \scriptsize
			\centering
			\begin{tabular}{c|c|c|c|c|c} 
                & \citet{larocca2021theory} & \citet{anschuetz2021critical} & \citet{you2022convergence} &  \citet{liu2022analytic}  &  \textbf{Our results} \\
                \hline 
                C & $\Omega(\poly(d_{\eff}))$ & $\Omega(\exp(n))$ &  $\mathcal{O}(\poly(d_{\eff}))$ & $\mathcal{O}(\exp(n))$ & $\mathcal{O}(\poly(d_{\eff}))$    \\
                \hline 
                $T$ & \XSolidBrush  & \XSolidBrush & $\mathcal{O}(\log(d_{\eff})\log(\frac{1}{\varepsilon})) $ & $\mathcal{O}(\frac{4^n\log(\frac{1}{\varepsilon})}{LK})$  & $\mathcal{O}(\frac{d_{\eff}^2\log(\frac{1}{\varepsilon})}{LK})$
			\end{tabular}
			\makeatletter\def\@captype{table}\makeatother 
			\caption{\small{ \textbf{A comparison of the convergence rate for over-parameterized QNNs.}  The label `C' and `T' refers to the critical point and $\epsilon$-convergence rate respectively. The label ``\XSolidBrush'' denotes that the paper did not study certain regimes. Note that the achieved results in Ref.~ \citep{larocca2021theory} do not exhibit how  the problem Hamiltonian effects $d_{\eff}$.}}
			\label{tab:results_summary}
		\end{table}

		\item Our last contribution is devising SP, an automatic scheme to identify the implicit symmetries of the problem Hamiltonian and utilize them to design a symmetric ansatz (see Section \ref{subsec:SP}). An attractive feature of SP is its flexibility, where any  heuristic that has the ability to capture certain symmetries of the problem Hamiltonian can be seamlessly embedded into SP to further boost its performance. 
	\end{enumerate}

	\section{Effective QNTK allows an improved convergence of QNNs}
	Here we establish foundations about why symmetric ans\"atze have the ability to enhance the trainability of QNNs in ground state preparation (GSP) tasks. To do so, we propose a novel concept---effective quantum neural tangent kernel (EQNTK), to reconcile the QNTK theory with the symmetry of the problem Hamiltonian. Attributed to EQNTK, we uncover that the advance of symmetric ans\"atze originates from their ability to dramatically decrease the over-parameterization threshold. For elucidating, we first interpret the necessary backgrounds in Sec.~\ref{subsec:problem-setup} and then present our main theoretical results in Sec.~\ref{subsec:EQNTK}.

	\subsection{Problem setup}\label{subsec:problem-setup}
	\textbf{Ground state preparation}. Given an $n$-qubit Hamiltonian $H \in \mathbb{C}^{2^n \times 2^n}$, GSP aims to find the eigenvector $\ket{\psi^*}\in \mathbb{C}^{2^n}$ (i.e., the ground state) of $H$ corresponding to its minimum eigenvalue. For any $n$-qubit state $\ket{\psi}$, the variational principle ensures $\bra{\psi^*} H\ket{\psi^*} \le \bra{\psi} H \ket{\psi}$ and the equality is satisfied iff $\ket{\psi} = \ket{\psi^*}$. Since the dimension  of $\ket{\psi^*}$ scales exponentially with $n$, GSP is classically intractable for a large $n$.      
	
	\textbf{Quantum neural networks}. A QNN can be described by a triplet $(\ket{\psi_0}, U(\bm{\theta}), H)$. When it is applied to solve GSP, an ansatz $U(\bm{\theta})$ (i.e., a parameterized unitary) prepares a variational state $\ket{\psi(\bm{\theta})}=U(\bm{\theta})\ket{\psi_0}$ with a fixed input state $\ket{\psi_0}$. The parameters $\bm{\theta}$ are optimized by minimizing the loss function
	\begin{equation}\label{eq:vqe_loss}
		\mathcal{L}(\bm{\theta}) = \frac{1}{2} \left(\bra{\psi_0}U(\bm{\theta})^{\dagger} H U(\bm{\theta})\ket{\psi_0}  -E_0 \right)^2,
	\end{equation}      
	where $E_0=\bra{\psi^*} H\ket{\psi^*}$ refers to the ground state energy of $H$.  The optimization follows an iterative manner, i.e., the classical optimizer continuously leverages the output of the quantum circuits to update $\bm{\theta}$ and the update rule is $\bm{\theta}^{(t+1)}=\bm{\theta}^{(t)}-\eta \partial \mathcal{L}(\bm{\theta}^{(t)})/\partial \bm{\theta}$, where $\eta$ refers to the learning rate. See Appendix~\ref{sec:appdix_Opt_QNNs} for details.
	
	\textbf{Remark.} We adopt $E_0$ to facilitate the convergence analysis and our results cover general loss functions where $E_0$ is replaced by $C\in \mathbb{R}$ with $C\leq E_0$. See Appendix~\ref{sec:appdix_MSE} for details.
	
	\textbf{Constructions of (symmetric) ans\"atze}.	The power of QNNs depends on the employed ansatz $U(\bm{\theta})$. A general form of $U(\bm{\theta})$ covering many typical ans\"atze such as Hamiltonian variational ansatz (HVA) and hardware efficient ansatz (HEA) \citep{bharti2022noisy,qian2021dilemma} yields   
	\begin{equation}\label{eq:PSA}
		U(\bm{\theta})=\prod_{\ell=1}^L U_{\ell}(\bm{\theta}_{\ell}), \quad  U_{\ell}(\bm{\theta}_{\ell})=\prod_{k=1}^K e^{-iG_k\bm{\theta}_{\ell k}},
	\end{equation}
	where $L$ refers to the layer number, $\bm{\theta} = (\bm{\theta}_1, \cdots, \bm{\theta}_L) \in \Theta\subseteq  \mathbb{R}^{LK}$ is trainable parameters living in the parameter space $\Theta$, $\bm{\theta}_{\ell} = (\bm{\theta}_{\ell1}, \cdots, \bm{\theta}_{\ell K})$ is trainable parameters at the $\ell$-th layer, and $\mathcal{A}=\{G_1, \cdots, G_K \}$ is a set of Hermitian traceless operators called an \textit{ansatz design}. Given $\Theta$ and $\mathcal{A}$, a set of ans\"atze forms a subgroup of $SU(2^n)$ with $\mathcal{U}_{\mathcal{A}} = \cup_{L=0}^{\infty}\{ U(\bm{\theta}): \bm{\theta} \in  \Theta \}$. The difference of ans\"atze originates from the varied $\Theta$ and $\mathcal{A}$. Given a Hermitian matrix $\Sigma$, the ansatz $U(\bm{\theta})$ is said to be symmetric with respect to $\Sigma$ if each element in $\mathcal{U}_{\mathcal{A}}$ is commutable with $\Sigma$. Mathematically, denote $\Sigma=\sum_{j=1}^p\sum_{k=1}^{s_j}\lambda_{j}\bm{v}_{jk}$ where $\lambda_{j}$ is the eigenvalue with $\lambda_i \ne \lambda_j$ for $i\ne j$, $\bm{v}_{jk}$ is the corresponding eigenvector, and $\sum_{j=1}^ps_{j}=2^n$. The explicit form of $\Sigma$ leads to a direct sum decomposition $\mathcal{H}=\oplus_{j=1}^p  V_j$ of the quantum state space, where $V_j$ is the invariant subspace spanned by the eigenvectors $\{\bm{v}_{j1}, \cdots, \bm{v}_{js_j}\}$. 
	
	\textbf{Convergence of QNNs}. A crucial metric to assess the performance of different QNNs is the $\epsilon$-convergence rate towards the global minimum  $\mathcal{L}(\bm{\theta}^*)$ with $\bm{\theta}^*=\min_{\bm{\theta} \in \bm{\Theta}}\mathcal{L}(\bm{\theta})$. 	
	\begin{definition}[$\epsilon$-convergence]\label{def:convergence}
		A QNN instance $(\ket{\psi_0}, U(\bm{\theta}), H)$ achieves an $\epsilon$-convergence if the trained parameters after $T$ iterations $\bm{\theta}^{(T)}$ satisfy  $\mathcal{L}(\bm{\theta}^{(T)}) \le \epsilon$ with $\epsilon\in \mathbb{R}$.
	\end{definition}
	This quantity measures the distance between the estimated and the optimal loss values, which can be derived via the quantum neural tangent kernel (QNTK)   
	\begin{equation}\label{eqn:QNTK}
		Q^{(t)} = \nabla \varepsilon_{t}^{\top} \nabla \varepsilon_{t},
	\end{equation}
	where $\varepsilon_{t}= \bra{\psi_0}U(\bm{\theta}^{(t)})^{\dagger} H U(\bm{\theta}^{(t)})\ket{\psi_0} -E_0$ denotes the residual training error and $\nabla \varepsilon_{t}$ is the gradients of $\varepsilon_{t}$ with respect to $\bm{\theta}$. The following theorem describes the $\epsilon$-convergence of the over-parameterized QNN with an arbitrary ansatz.
	\begin{theorem}[\citet{liu2022analytic}]\label{thm:orig-QNTK}
		Following notations in Eqns.~(\ref{eq:vqe_loss})-(\ref{eqn:QNTK}), when  $U(\bm{\theta})$ matches the Haar distribution up to the fourth moment, the number of parameters satisfies $ LK \gg 1$, and the learning rate $\eta \ll 1$, the training dynamics of a QNN instance $(\ket{\psi_0}, U(\bm{\theta}), H)$  yields
		\begin{equation}\label{eq:exp_decay}
			\varepsilon_t \approx (1-\eta \bar{Q})^t\varepsilon_0 \approx e^{-\gamma t} \varepsilon_0.
		\end{equation}
		where $\gamma=\eta \bar{Q}$ is the indicator of the decay rate and $\bar{Q}=\mathcal{O}({LK\Tr(H^2)}/{4^n})$ refers to the expectation of $Q$ on Haar average.
	\end{theorem}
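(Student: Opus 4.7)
The plan is to derive Eq.~(\ref{eq:exp_decay}) in three steps: (i) linearize one gradient-descent step in $\eta$ to obtain a discrete recursion for $\varepsilon_t$; (ii) evaluate $\bar Q = \E_{U\sim\mathrm{Haar}}[Q]$ via Weingarten calculus; and (iii) invoke a ``frozen-kernel'' property to justify replacing the instantaneous $Q^{(t)}$ by the single value $\bar Q$ throughout training.

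First, since $\mathcal{L}(\bm{\theta}) = \tfrac{1}{2}\varepsilon_t^2$ gives $\nabla\mathcal{L}(\bm{\theta}^{(t)}) = \varepsilon_t\nabla\varepsilon_t$, the update reads $\bm{\theta}^{(t+1)} - \bm{\theta}^{(t)} = -\eta\,\varepsilon_t\,\nabla\varepsilon_t$, and a first-order Taylor expansion of $\varepsilon_{t+1}$ around $\bm{\theta}^{(t)}$ gives
\begin{equation*}
\varepsilon_{t+1} = \varepsilon_t + \nabla\varepsilon_t^{\top}(\bm{\theta}^{(t+1)}-\bm{\theta}^{(t)}) + \mathcal{O}(\eta^2) = (1 - \eta Q^{(t)})\,\varepsilon_t + \mathcal{O}(\eta^2).
\end{equation*}
Under $\eta\ll 1$ the quadratic remainder is subleading, so iterating yields $\varepsilon_t = \prod_{s=0}^{t-1}(1-\eta Q^{(s)})\,\varepsilon_0$; if the $Q^{(s)}$ can be replaced by a single constant $\bar Q$, this collapses to $(1-\eta\bar Q)^{t}\varepsilon_0 \approx e^{-\eta\bar Q\,t}\varepsilon_0$, the claimed exponential decay with rate $\gamma=\eta\bar Q$.

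Next I would compute $\bar Q$ in closed form. Writing $U(\bm{\theta}) = VW$ where $V,W$ collect the sub-circuits after and before the $(\ell,k)$-th gate, the parameter-shift identity gives $\partial_{\bm{\theta}_{\ell k}}\varepsilon_t = i\,\bra{\psi_0}W^{\dagger}[G_k,\,V^{\dagger}HV]W\ket{\psi_0}$. Squaring and integrating $W$ (and then $V$) against the Haar measure using the unitary 2-design Weingarten identity—and exploiting that commutators are traceless so the leading $1/(2^n(2^n+1))$ term survives—produces $\E[(\partial_{\bm{\theta}_{\ell k}}\varepsilon_t)^2] = \Theta(\Tr(G_k^2)\Tr(H^2)/4^n)$ for each of the $LK$ parameters. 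Summing contributions then yields $\bar Q = \mathcal{O}(LK\,\Tr(H^2)/4^n)$.

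The hard part is step (iii), justifying $Q^{(t)} \approx \bar Q$ uniformly in $t$. At $t=0$ the fourth-moment hypothesis, via a unitary $4$-design Weingarten computation of $\E[Q^2] - \bar Q^{2}$, delivers $\Var(Q^{(0)}) = \mathcal{O}(\bar Q^{2}/2^{n})$, so $Q^{(0)}$ concentrates exponentially well about $\bar Q$. The dynamical drift is more delicate: one must bound $\|\partial Q/\partial\bm{\theta}\|$ from the same 4-design moments and show that because each per-step increment is $\mathcal{O}(\eta\varepsilon_t)$ and $\varepsilon_t$ itself decays geometrically, the cumulative deviation $|Q^{(t)}-\bar Q|$ over the full training horizon remains $o(\bar Q)$ precisely when $LK \gg 1$. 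This is where the over-parameterization hypothesis is consumed, and care is needed to prevent the compounded $\mathcal{O}(\eta^2)$ remainders and slow kernel motion from corrupting the geometric recursion established in step (i).
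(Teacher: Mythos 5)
Your proposal follows essentially the same route as the paper's own argument (given in Appendix~\ref{sec:appdix-EQNTK-proof} for the generalized Theorem~\ref{thm:EQNTK}, of which this statement is the special case $d_{\eff}=2^n$, $H^*=H$, $\Pi^*=\mathbb{I}$): linearize the gradient-descent step to get $\varepsilon_{t+1}=(1-\eta Q^{(t)})\varepsilon_t$, evaluate $\bar Q$ by Haar/Weingarten integration of the parameter-shift gradients $i\bra{\psi_0}U_{-}^{\dagger}[G_k,U_{+}^{\dagger}HU_{+}]U_{-}\ket{\psi_0}$, and freeze the kernel via a fourth-moment fluctuation bound. The one substantive discrepancy is your claimed $\Var(Q^{(0)})=\mathcal{O}(\bar Q^2/2^n)$: the paper's computation gives relative fluctuation $\Delta Q/\bar Q=\mathcal{O}(1/\sqrt{LK})$, so the concentration that justifies the frozen kernel comes from the number of parameters $LK\gg 1$ (the hypothesis you correctly identify as load-bearing, though you place it in the dynamical-drift step rather than in the static variance, and the paper itself does not carry out the drift analysis you outline).
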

	It indicates that the critical point to reach the over-parameterization region is $|\bm{\theta}|\sim \mathcal{O}(4^n/(\eta \Tr(H^2)))$. In this setting, the exponent in Eqn.~(\ref{eq:exp_decay}) meets $\gamma \sim \mathcal{O}(1)$ and promises an exponential convergence. Besides, Eqn.~(\ref{eq:exp_decay}) hints that the convergence rate of QNNs is continuously enhanced by increasing the  value of QNTK, which can be achieved by growing the number of parameters or decreasing the system size.
	
	\subsection{Effective Quantum Neural Tangent Kernel}\label{subsec:EQNTK}

	\begin{wrapfigure}{R}{0.4\textwidth}
		\begin{center} 			\includegraphics[width=0.25\textwidth]{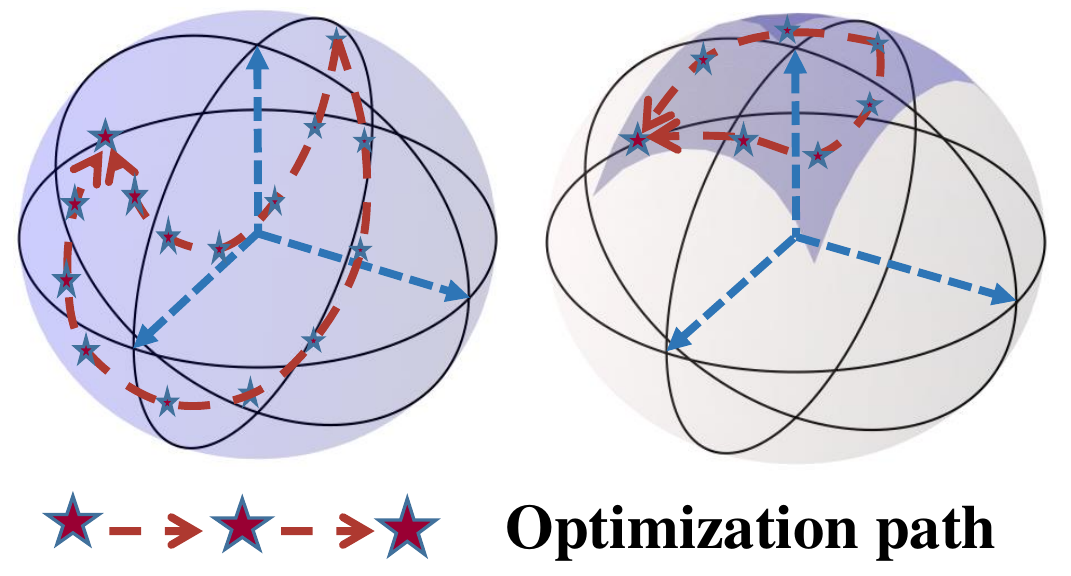}
			\caption{\small{ \textbf{Training dynamics of QNNs with symmetric ans\"atze.} The left and right panels illustrate the dynamic of variational states corresponding to the asymmetric and symmetric ans\"atze, respectively. The shadow region means the solution space, which is the whole Hilbert space for asymmetric ansatz, and the restricted invariant subspace for symmetric ansatz. }}
			\label{fig:schematic_opt_path}
		\end{center}
	\end{wrapfigure}
	
	The exponential scaling behavior with the number of qubits $n$ in Theorem \ref{thm:orig-QNTK} causes the realization of the over-parameterized QNNs to be impractical for large  systems. Moreover, the corresponding convergence rate is \textbf{independent of the refined structure information} of ans\"atze. This contradicts with the empirical evidence such that symmetric ans\"atze outperform asymmetric  ans\"atze with fast convergence in training QNNs. It thus highly demands carrying out new theories to stress these issues.

	Here we propose a novel concept---effective QNTK (EQNTK) to resolve the above dilemma and exhibit how symmetry improves the trainability of QNNs. As shown in Fig.~\ref{fig:schematic_opt_path}, given a QNN $(\ket{\psi_0}, U(\bm{\theta}), H)$ whose input state $\ket{\psi_0}$ and the ground state $\ket{\psi^*}$ live in the same subspace $V^*\subset \mathbb{C}^{2^n}$ and $V^*$ is invariant with the employed symmetric ansatz $U(\bm{\theta})$, the \textbf{training dynamics} of $\ket{\psi(\bm{\theta})}=U(\bm{\theta})\ket{\psi_0}$ can be exactly captured by $V^*$, a much smaller space than the whole state space. Suppose that the state space $\mathcal{H}$ under the symmetric ansatz design $\mathcal{A}$ can be decomposed into $\mathcal{H}=\oplus_{j=1}^p V_j$ and there exists $j^*\in [p]$ such that $V_{j^*}=V^*$ includes the input state $\ket{\psi_0}$, the ground state $\ket{\psi^*}$, and all possible variational states $\{\ket{\psi(\bm{\theta})}|\bm{\theta}\in\Theta\}$. Then the dynamics of $\ket{\psi(\bm{\theta})}$ can be derived by the dimension of this subspace $d_{\eff}=|V^*|$, dubbed the \textit{effective dimension} of the QNN. 
	
	\begin{definition}[Effective dimension]\label{def:eff_dim}
		Consider a QNN instance $(\ket{\psi_0}, U(\bm{\theta}), H)$ with symmetric ansatz design $\mathcal{A}$. Suppose $V^*$ is the invariant subspace covering $\ket{\psi_0}$,   $\{\ket{\psi(\bm{\theta})}|\bm{\theta}\in\Theta\}$, and $\ket{\psi^*}$. Then the effective dimension of this QNN is $d_{\eff}=|V^*|$. The projection on this subspace is defined as $\Pi=PP^{\dagger}$, where $P \in \mathbb{C}^{d\times d_{\eff}}$ is an arbitrary set of orthonormal basis.
	\end{definition}
	As a result, for symmetric ans\"atze,  $Q^{(t)}$ and $\bar{Q}$ in Theorem \ref{thm:orig-QNTK} should be controlled by $d_{\eff}$ instead of $2^n$. This integration of the effective dimension transforms QNTK in Eqn.~(\ref{eqn:QNTK}) to EQNTK, which reduces the threshold to reach over-parameterization and accelerates the convergence (see Fig.~\ref{fig:schematic_critical}). The following theorem establishes the convergence theory of QNNs with symmetric ans\"atze under EQNTK, whose proof is deferred to Appendix~\ref{sec:appdix-EQNTK-proof}. 
	\begin{theorem}\label{thm:EQNTK}
		Consider the QNN instance $(\ket{\psi_0}, U(\bm{\theta}), H)$  with the effective dimension $d_{\eff}$. Following notations in Eqns.~(\ref{eq:vqe_loss})-(\ref{eqn:QNTK}), when the distribution of $U(\bm{\theta})$ constrained to the invariant subspace with projection $\Pi=PP^{\dagger}$ matches the Haar distribution up to the fourth moment, the number of parameters satisfies $ LK \gg 1$,  the learning rate $\eta \ll 1$, and denoting $H^*=PHP^{\dagger}$, the training dynamics of a QNN instance $(\ket{\psi_0}, U(\bm{\theta}), H)$  yields
		\begin{equation}
			\varepsilon_t \approx (1-\eta \bar{Q}_S)^t\varepsilon_0 \approx e^{-\gamma t} \varepsilon_0.
		\end{equation}
		where $\bar{Q}_S=\mathcal{O}({LK\Tr((H^*)^{2})}/{d_{\eff}^2})$ refers to the expectation of EQNTK $Q_S$ on Haar average.
	\end{theorem}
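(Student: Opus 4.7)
The plan is to mirror the proof strategy of Theorem~\ref{thm:orig-QNTK}, but carry out every Haar-average estimate on the invariant subspace $V^*$ of dimension $d_{\eff}$ rather than on the full $2^n$-dimensional Hilbert space. The key structural observation is that because $\ket{\psi_0}$, $\ket{\psi^*}$, and every variational state $\ket{\psi(\bm{\theta})}$ lie in $V^*$, we have $U(\bm{\theta})\ket{\psi_0} = \Pi U(\bm{\theta})\Pi\ket{\psi_0}$ with $\Pi = PP^{\dagger}$. Introducing $\ket{\tilde{\psi}_0} = P^{\dagger}\ket{\psi_0}$, $H^* = PHP^{\dagger}$, and $\tilde{U}(\bm{\theta}) = P^{\dagger} U(\bm{\theta}) P$, the residual training error then admits the $d_{\eff}$-dimensional representation $\varepsilon_t = \bra{\tilde{\psi}_0}\tilde{U}(\bm{\theta}^{(t)})^{\dagger} H^{*} \tilde{U}(\bm{\theta}^{(t)})\ket{\tilde{\psi}_0} - E_0$, so that every quantity entering the gradient-descent analysis can be evaluated on $V^*$ with $H^*$ playing the role previously held by $H$.

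Next, I would expand the EQNTK as $Q_S^{(t)} = \sum_{\ell,k}(\partial_{\theta_{\ell k}}\varepsilon_t)^2$ and, using the identity $\partial_{\theta_{\ell k}}\varepsilon_t = i\bra{\psi(\bm{\theta}^{(t)})}[\tilde{G}_{\ell k}, H^*]\ket{\psi(\bm{\theta}^{(t)})}$ where $\tilde{G}_{\ell k}$ is the generator $G_k$ conjugated by the partial circuit and restricted to $V^*$, compute the expectation of each summand by Weingarten calculus on $U(d_{\eff})$. Under the stated assumption that the distribution of $\tilde{U}(\bm{\theta})$ matches the Haar measure on $U(d_{\eff})$ up to the fourth moment, the second-moment integrals produce a factor $1/(d_{\eff}^2-1)$; after summing the $LK$ parameter contributions and collapsing the traces via $\sum_k \Tr(G_k^2) = \Theta(1)$ per generator, one obtains $\mathbb{E}[Q_S^{(t)}] = \bar{Q}_S = \mathcal{O}(LK\,\Tr((H^*)^2)/d_{\eff}^2)$, exactly matching the claim.

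The third step is to bound the variance of $Q_S^{(t)}$. This is where the fourth-moment assumption is genuinely consumed: one expresses $\mathrm{Var}(Q_S^{(t)})$ as a sum of Weingarten integrals over four copies of $\tilde{U}(\bm{\theta})$, bounds each by $\mathcal{O}(1/d_{\eff}^4)$, and concludes via Chebyshev that with high probability $Q_S^{(t)}$ stays within $(1\pm o(1))\bar{Q}_S$ for all $t$ along the trajectory, provided $LK \gg 1$. Substituting this into the first-order Taylor expansion of the gradient-descent update $\varepsilon_{t+1} = \varepsilon_t - \eta\,Q_S^{(t)}\varepsilon_t + \mathcal{O}(\eta^2)$ and unrolling the recursion yields the advertised exponential decay $\varepsilon_t \approx (1-\eta\bar{Q}_S)^t\varepsilon_0 \approx e^{-\gamma t}\varepsilon_0$ with $\gamma = \eta\bar{Q}_S$.

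The main obstacle will be the Weingarten bookkeeping on $U(d_{\eff})$. Unlike the unrestricted case, the effective generators $\tilde{G}_{\ell k} = P^{\dagger} G_k P$ (after conjugation by the surrounding layers) need not remain traceless on $V^*$, so the standard shortcut that kills the leading-order Haar term no longer applies; one must separately track the trace and traceless components of $\tilde{G}_{\ell k}$ and verify that the trace part does not overwhelm the claimed scaling. A secondary subtlety is justifying that the restricted ensemble really is a $4$-design on $U(d_{\eff})$: this is a nontrivial modeling assumption that, strictly speaking, must be argued case-by-case for each symmetric ansatz family and is the analogue of the standard $2$-design/$4$-design hypothesis used in Theorem~\ref{thm:orig-QNTK}. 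With these two ingredients handled, the remaining steps reduce to algebraic rearrangement.
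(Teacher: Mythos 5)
Your proposal follows essentially the same route as the paper: project every quantity onto the invariant subspace (the content of the paper's Lemma~\ref{lem:basic_fact} and Lemma~\ref{lem:QNTK_proj}), compute the Haar mean of $Q_S$ by second-moment Weingarten calculus on the $d_{\eff}$-dimensional restricted ensemble, control the fluctuation by fourth-moment integrals so that $\Delta Q_S/\bar{Q}_S \sim 1/\sqrt{LK} \ll 1$, and unroll the gradient-descent recursion to get the exponential decay. The one quantitative slip is your normalization $\Tr(G_k^2)=\Theta(1)$ per generator: the paper uses $\Tr((G_k^*)^2)\approx d_{\eff}$ for Pauli-string generators restricted to $V^*$, and this factor is exactly what turns the raw Weingarten denominators into the claimed $d_{\eff}^{-2}$ scaling rather than $d_{\eff}^{-3}$. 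Your worry about non-traceless restricted generators is moot in the leading-order computation, since the object entering the second-moment integral is the commutator $[G_{k}^*,\,(U_{-,\ell k}^*)^{\dagger} H^* U_{-,\ell k}^*]$, which is traceless regardless of $\Tr(G_k^*)$.
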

	
	The above results indicate that when the number of trainable parameters scales with $LK\sim \mathcal{O}(d_{\eff}^2/(\eta \Tr((H^*)^2)))$, the adopted symmetric ansatz reaches the over-parameterization regime. Compared with QNTK, the reduction of parameters in the order of $(2^n/d_{\eff})^2$ not only ensures the practical utility of over-parameterized QNNs, but also explains the empirical observations that symmetric ans\"atze require fewer parameters to reach the critical point than that of the asymmetric ans\"atze. More importantly, unlike prior results arguing that the trainability can always be improved by over-parameterization, our bound suggests that involving more parameters beyond the critical point may degrade the convergence since the underlying symmetry may be broken and the effective dimension $d_{\eff}$ could be large.

	\textbf{Remark.} (i) The derived EQNTK can be used to diagnose the barren plateaus of QNNs. Particularly, the quantity $Q_S/(LK)$ amounts to the variance of the gradient whose average is zero under the 4-design assumption. In other words, when the number of parameters $LK$ is fixed, a large EQNTK value is preferred  to avoid barren plateaus. (ii) The effective dimension can be quantified by other metrics beyond the dimension of the invariant subspace. An alternative is the dynamical Lie algebra (DLA) \citep{larocca2021theory}, which measures the controllability of the quantum system. The following lemma shows the convergence of QNNs with symmetric ans\"atze under this measure, whose proof is given in  Appendix~\ref{sec:appdix-DLA_EQNTK}.
	\begin{prop}[Informal]\label{lem:DLA_EQTNK}
		Following notations in Theorem \ref{thm:EQNTK}, denote the   dynamical Lie algebra of the QNN instance $(\ket{\psi_0}, U(\bm{\theta}), H)$ as $\mathfrak{g}$, and assume that the number of parameters $LK \gg 1$ and the learning rate $\eta \ll 1$. If there exists an invariant subspace $V_\mathfrak{g}$ with dimension $d_{\mathfrak{g}}$ under the DLA $\mathfrak{g}$ including the input state $\ket{\psi_0}$ and the ground state $\ket{\psi^*}$,  the DLA-based EQNTK $Q_D$ corresponding to the ansatz $U(\bm{\theta})$ leads to the training dynamics 
		\begin{equation}\label{eq:exp_decay_DLA}
			\varepsilon_t \approx (1-\eta \bar{Q}_D)^t\varepsilon_0 \approx e^{-\gamma t} \varepsilon_0.
		\end{equation}
		where $\bar{Q}_D=\mathcal{O}({LK\Tr((H^*)^{2})}/{d_{\mathfrak{g}}^2})$ refers to the expectation of $Q_D$ on Haar average.  	
	\end{prop}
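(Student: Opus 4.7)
The plan is to mirror the proof of Theorem~\ref{thm:EQNTK} with the invariant subspace $V^*$ replaced by the DLA-invariant subspace $V_\mathfrak{g}$, so the main task is to justify that every quantity appearing in the training dynamics can be restricted to $V_\mathfrak{g}$ with dimension $d_\mathfrak{g}$ in place of $d_{\eff}$. First I would observe that every generator $G_k$ in the ansatz design $\mathcal{A}$ lies in $\mathfrak{g}$ by definition of the dynamical Lie algebra, so each factor $e^{-iG_k\bm{\theta}_{\ell k}}$ preserves $V_\mathfrak{g}$, and therefore the full unitary $U(\bm{\theta})$ decomposes as a direct sum $U^\star(\bm{\theta})\oplus U^\perp(\bm{\theta})$ with $U^\star(\bm{\theta})$ acting on $V_\mathfrak{g}$. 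Since $\ket{\psi_0}, \ket{\psi^*} \in V_\mathfrak{g}$, the entire trajectory $\{\ket{\psi(\bm{\theta}^{(t)})}\}_{t\ge 0}$ and the optimal target state remain inside $V_\mathfrak{g}$, so we can rewrite $\varepsilon_t = \bra{\psi_0} P (U^\star(\bm{\theta}^{(t)}))^\dagger H^* U^\star(\bm{\theta}^{(t)}) P^\dagger \ket{\psi_0} - E_0$ with $H^* = P H P^\dagger$.

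Once this reduction is in place, the DLA-based EQNTK is defined exactly as $Q_D^{(t)} = \nabla\varepsilon_t^\top \nabla\varepsilon_t$ but with the gradients now computed against generators that act effectively on a $d_\mathfrak{g}$-dimensional space. Next I would expand $\partial_{\bm{\theta}_{\ell k}}\varepsilon_t$ using the parameter-shift structure and compute the Haar average over $U^\star(\bm{\theta})$. Because the DLA $\mathfrak{g}$ is by construction the Lie algebra whose exponential generates the dynamical group on $V_\mathfrak{g}$, the controllability hypothesis implies that the ensemble of $U^\star(\bm{\theta})$ converges to the Haar measure on the unitary group of $V_\mathfrak{g}$, and under the standing assumption that this convergence holds up to the fourth moment, Weingarten integration on $U(d_\mathfrak{g})$ can be applied term-by-term. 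This yields exactly the scaling $\bar{Q}_D = \mathcal{O}(LK\,\Tr((H^*)^2)/d_\mathfrak{g}^2)$, in precise analogy with the derivation of $\bar{Q}_S$ in Theorem~\ref{thm:EQNTK}.

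With $\bar{Q}_D$ in hand, the training-dynamics step is the same as in the proof of Theorem~\ref{thm:EQNTK}: the gradient descent update $\bm{\theta}^{(t+1)} = \bm{\theta}^{(t)} - \eta \,\partial_{\bm{\theta}}\mathcal{L}(\bm{\theta}^{(t)})$ gives, to first order in $\eta$, the recursion $\varepsilon_{t+1} = \varepsilon_t - \eta Q_D^{(t)}\varepsilon_t + \mathcal{O}(\eta^2)$. Concentration of $Q_D^{(t)}$ around $\bar{Q}_D$ for $LK \gg 1$ (justified by the same 4-design moment bounds used to compute the mean) replaces $Q_D^{(t)}$ by $\bar{Q}_D$ uniformly in $t$ up to controllable error, so iterating produces $\varepsilon_t \approx (1-\eta\bar{Q}_D)^t \varepsilon_0 \approx e^{-\gamma t}\varepsilon_0$ with $\gamma = \eta\bar{Q}_D$, which is the claim.

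The main obstacle is the 4-design assumption. The DLA only guarantees that the group generated by the ansatz is dense in the connected Lie group $e^{i\mathfrak{g}}$ acting on $V_\mathfrak{g}$, which a priori gives universality but not quantitative convergence to a unitary 4-design in finite depth. I would handle this by stating it as an explicit assumption parallel to the one in Theorem~\ref{thm:orig-QNTK} and Theorem~\ref{thm:EQNTK}, namely that the distribution of $U^\star(\bm{\theta})$ on $V_\mathfrak{g}$ matches the Haar measure on $U(d_\mathfrak{g})$ up to the fourth moment; the rest of the argument is then a mechanical adaptation of the Haar-integration calculation in the proof of Theorem~\ref{thm:EQNTK}, with $d_{\eff}$ replaced by $d_\mathfrak{g}$ and the projector $\Pi$ taken to be the orthogonal projector onto $V_\mathfrak{g}$. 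The remaining subtlety is verifying that cross-terms between $V_\mathfrak{g}$ and its complement vanish in the gradient computation; this follows from the direct-sum structure of $U(\bm{\theta})$ and the fact that $\ket{\psi_0} \in V_\mathfrak{g}$, so every such cross-term carries a factor of $P^\perp\ket{\psi_0} = 0$.
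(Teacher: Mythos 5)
Your proposal is correct and matches the paper's proof in substance: both reduce the statement to Theorem~\ref{thm:EQNTK} by showing that the DLA-invariant subspace $V_\mathfrak{g}$ is invariant under the full ansatz group, so that the Haar-averaging calculation and the error recursion go through verbatim with $d_{\eff}$ replaced by $d_{\mathfrak{g}}$. The only difference is that the paper (Lemma~\ref{lem:DLA_U}) proves the stronger two-sided identity $\mathcal{U}_{\mathcal{A}}=\{e^{V}: V\in\mathfrak{g}\}$ via the Baker--Campbell--Hausdorff formula to conclude that the two invariant subspaces coincide exactly, whereas you use only the easy inclusion (each generator lies in $\mathfrak{g}$, hence preserves $V_\mathfrak{g}$) and compensate by stating the Haar/fourth-moment assumption on $V_\mathfrak{g}$ explicitly---which is what the calculation actually requires.
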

	
	\begin{wrapfigure}{R}{0.52\textwidth}
		\begin{algorithm}[H]
			\label{alg:sp}
			\SetCustomAlgoRuledWidth{0.48\textwidth}   
			\SetKwInOut{Input}{Input} 
			\SetKwInOut{Output}{Output}
			\Input{Problem Hamiltonian $\widetilde{H}=(\sum_{j=1}^q\alpha_j H_j) \otimes \mathbb{I}^{\otimes{m}}$, 
				the ansatz design $\mathcal{A}$ and the parameter space $\Theta$ in Eqn.~(\ref{eq:PSA}).}
			Step 1. Initialize an over-parameterized and asymmetric ansatz via $\mathcal{A}$ and $\Theta$; \\
			Step 2. Symmetry identification: \\
			~~~~2-1. Remove the gates on wires corresponding to the redundant part of $\widetilde{H}$ in $\mathcal{A}$, i.e., $\mathbb{I}^{\otimes{m}}$. \\
			~~~~2-2. Remove the gates such that the pruned ansatz design $\mathcal{A}_{\prune}=\{H_1, \cdots, H_q\}$. \\
			~~~~2-3.  Assign the spatial symmetry of $\mathcal{A}_{\prune}$  by correlating some parameters and obtain $\Theta_{\prune} \subseteq \Theta$. \\ 
			\Output{Pruned ansatz design $\mathcal{A}_{\prune}$ and parameter space $\Theta_{\prune}$.} 
			
			\caption{Symmetric pruning (SP)}
		\end{algorithm}
	\end{wrapfigure}
	
	\section{Symmetrical pruning with EQNTK}\label{subsec:SP}	
	Beyond analyzing the convergence rate, another ad-hoc topic in GSP is designing advanced ans\"atze to improve the trainability of QNNs. Although over-parameterization and contemporary symmetric ans\"atze partially address this problem, both of them have evident caveats. The former may request exponential parameters to satisfy the  condition of over-parameterization, while the latter requires explicit information for the symmetries of the problem Hamiltonian. To compensate for these deficiencies, here we devise \textbf{symmetrical pruning (SP)}, an automatic scheme to design symmetric ans\"atze with the enhanced trainability of QNNs. Conceptually, SP distills a symmetric over-parameterized ansatz from an asymmetric over-parameterized ansatz. Supported by the EQNTK theory, the extracted ansatz is resource-friendly in implementation since it holds a small effective dimension and only needs a few trainable parameters to compass the over-parameterization.     
	
	The Pseudo code of SP is summarized in Alg.~\ref{alg:sp} and its schematic illustration is shown in Fig.~\ref{fig:schematic_SP}. Suppose the problem Hamiltonian is $\widetilde{H}=H \otimes \mathbb{I}^{\otimes{m}}$, where $H=\sum_{j=1}^q \alpha_j H_j$, $\alpha_j$ is the real coefficient and $H_j$ is the tensor product of Pauli matrices on $n$ qubits, SP builds the symmetric ansatz of $\widetilde{H}$ with two primary steps, i.e., initialization and symmetry identification. The initialization step is choosing an initial over-parameterized QNN by setting down the ansatz design $\mathcal{A}$ and the parameter space $\Theta$. Note that $\mathcal{A}$ should contain all Pauli terms in $H$ and $\Theta$ should ensure an $\epsilon$-convergence of QNNs, e.g., a possible choice is adopting a sufficient deep hardware efficient ansatz. Next, the symmetry identification step iteratively discovers the system symmetry, structure symmetry, and spatial symmetry, which is completed by  three sub-steps. Step 2-1   symmetrically prunes the qubit wires. That is, all qubit gates interact with the redundant part of $\widetilde{H}$, i.e., the identity term $\mathbb{I}^{\otimes{m}}$, are removed. Step 2-2 symmetrically prunes  the structure. This step drops the parameterized single-qubit gates and the two-qubit gates so that the pruned ansatz design $\mathcal{A}_{\prune}$ can be block diagonalized under the projection on the eigenspace of $H=\sum_{j=1}^q \alpha_j H_j$. A possible solution is setting $\mathcal{A}_{\prune}=\{H_1, \cdots, H_q\}$ and the pruned ansatz $U_{\prune}(\bm{\theta})$ takes the form of Eqn.~(\ref{eq:PSA}) with $U_{\ell}(\bm{\theta}_{\ell}) = \Pi_{k=1}^q e^{-iH_k\bm{\theta}_{\ell k}}$. Step 2-3 correlates symmetric parameters to demystify the spatial symmetry of $H$, which is accomplished by a heuristic related to identifying the graph automorphism group \citep{stoichev2019new}. See Appendix~\ref{sec:appdix-SP} for more details.  
	
	\begin{figure*}
		\centering
		\includegraphics[width=0.9\textwidth]{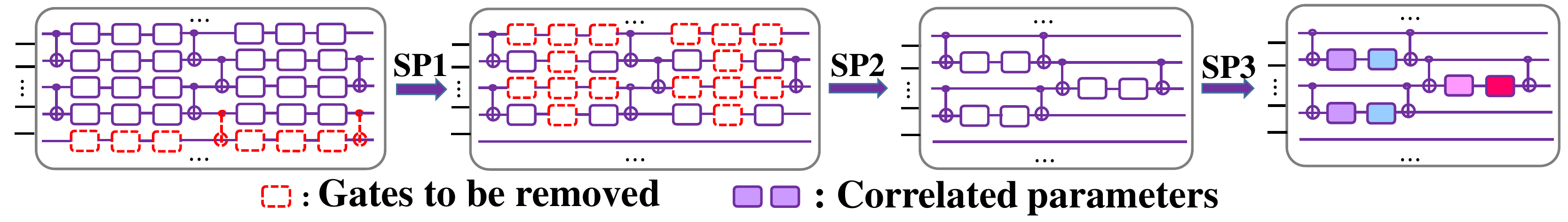}
		\caption{\small{\textbf{Schematic of symmetric pruning.} The proposed symmetric pruning (SP) distills the symmetric ansatz from a given asymmetric ansatz, completed by removing the redundant gates (highlighted by the red dashed boxes) and correlating the parameters in the gate respecting the spatial symmetry (highlighted by the solid boxes with the same color).}}
		\label{fig:schematic_SP}
	\end{figure*}
	
	\textbf{Remark.} (i) We emphasize that although both SP and the pruning techniques used in deep neural networks orient to remove redundant parameters and (quantum) neurons, they are fundamentally different. This is because classical pruning methods generally leverage the magnitude of weights or the gradient information to recognize such redundancy, which is impermissible in QNNs (refer to Appendix~\ref{sec:appdix-compare-SP-CP} for elaborations). (ii) SP is a flexible framework. Besides three symmetric properties in Alg.~\ref{alg:sp}, SP can effectively integrate with other symmetry identification methods in the second step.

	\section{Experiments}
	We carry out numerical simulations to  explore the theoretical properties of EQNTK and validate the effectiveness of the SP scheme in GSP. Two typical problem Hamiltonians in many-body physics and combinatorial optimization are considered. The omitted details are postponed to Appendix \ref{sec:appdix-numerical}.
	
	\textbf{Problem Hamiltonian.} 
	Let us first recap the two problem Hamiltonians.
	
	\textbf{1) Transverse-field Ising model.} Transverse-field Ising model (TFIM) has been employed to explore many interesting quantum systems. An $n$-qubit Hamiltonian of 1D TFIM with an open boundary condition is  defined as $H_{\TFIM}=-\sum_{j=1}^{n-1}\sigma_j^{z}\sigma_{j+1}^{z}- h\sum_{j=1}^n \sigma_{j}^{x}$, where $\sigma_j^{\mu}$ denotes the $\mu$-Pauli matrix (with $\mu=x, z$) acting on the $j$-th qubit, and $h$ is the strength of the transverse field. For simplicity, we set $h=1$ in the following simulations.

	2) \textbf{Maximum cut.}  Maximum cut (MaxCut) problem aims to partition the set of nodes $V$ in a graph $G=(V,E)$ into two parts such that the number of edges spanning multiple parts is maximized. The MaxCut problem can be recast to GSP. Namely, the objective of an $n$-node graph is encoded by an $n$-qubit Hamiltonian $H_{\MC} = \frac{1}{2} \sum_{(u,v)\in E} (I-\sigma_{u}^z \sigma_{v}^z)$ and the optimal solution corresponds to the ground state of $H_{\MC}$ as formulated in  Eqn.~(\ref{eq:vqe_loss}). Here we focus on the Erdos-Renyi graphs, which are generated by randomly connecting any pair nodes among $n$ nodes with probability $p=0.6$.    
	
	To verify the effectiveness of SP, the above problem Hamiltonians are modified as an $(n+m)$-qubit Hamiltonian $H=H_{M} \otimes \mathbb{I}^m (H_M = H_{\TFIM}, H_{\MC})$ with $n=6$ and $m=2$. 
	
	\textbf{Initialization of QNNs.} The hardware efficient ansatz (HEA) with the form of Eqn.~(\ref{eq:PSA}) is used as the initial ansatz, which is over-parameterized and asymmetric.  The layer number is set as $L\in\{4,6,8\cdots,28\}$    and $L\in \{4,6,10, \cdots, 36\}$ for TFIM and MaxCut, respectively.
	
	For each problem Hamiltonian, the input state is set as $\ket{\psi_0}=\ket{\bm{0}}$. The parameters $\bm{\theta}$ are uniformly sampled from the uniform distribution $[-\pi, \pi]$. The variational ansatz is trained by the Adam optimizer where the learning rate is $0.001$ and the rest hyper-parameters follow the default settings. The training of QNNs stops when the loss value is less than $10^{-8}$ or when the change in the loss function is less than $10^{-8}$ three times in a row. The maximum number of iterations is set as $T=10000$. The $\epsilon$ value in Definition \ref{def:convergence} is set as $10^{-5}$ for both TFIM and MaxCut. Each setting is repeated with $5$ times to collect the statistical results.
	
	\textbf{Evaluation metrics.} We utilize three metrics to assess the convergence rate of QNNs, i.e., (1) the loss value  $\mathcal{L}(\bm{\theta}^{(T)})$ at the convergence stage; (2) the number of iteration steps $T(\epsilon)\le T$ required to achieve the $\epsilon$-convergence; (3) the minimum number of parameterized gates  required to achieve $\epsilon$-convergence, which can also be interpreted as the threshold to achieve the over-parameterization regime. Additionally, we record the norm of the gradient at the initialization to verify the effectiveness of EQNTK in predicting the convergence.

	\begin{figure*} 
		\centering
		\includegraphics[width=0.92\textwidth]{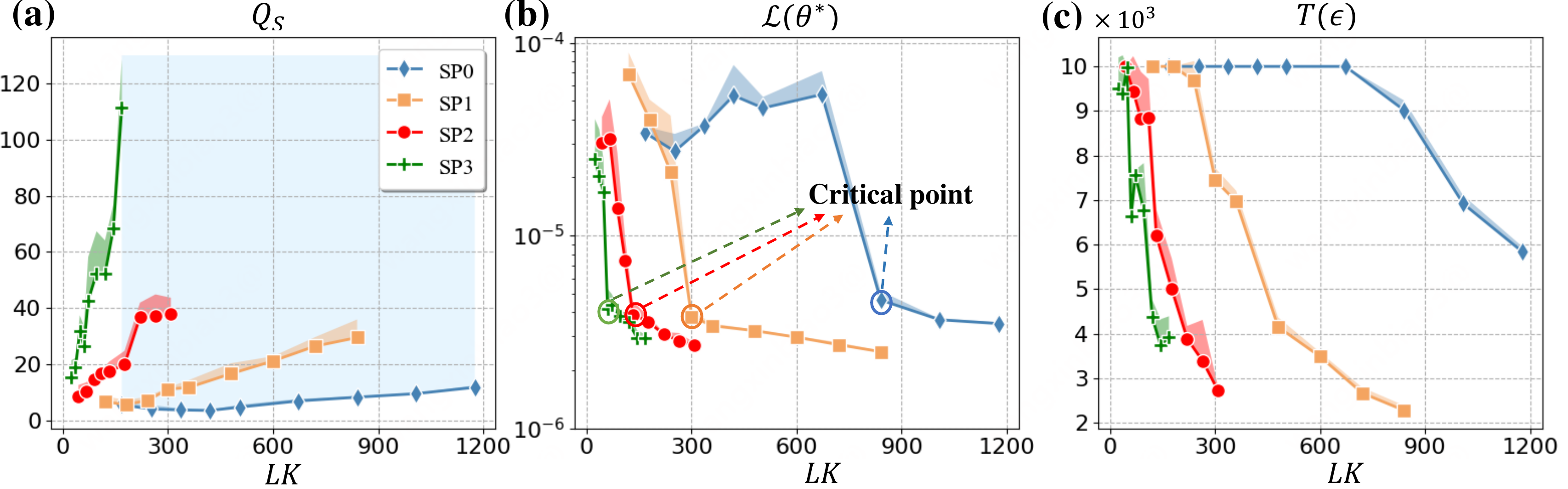}
		\caption{\small{\textbf{Results for TFIM model under symmetric pruning.} The panels (a)-(c) plot the  EQNTK value $Q_S$ at initialization, the loss value after convergence $\mathcal{L}(\bm{\theta}^{(T)})$, and the   iteration number  to achieve the $\epsilon$-convergence $T(\varepsilon)$ versus the number of parameters $LK$, respectively. The labels `SP0'-`SP3' refer to the initial ansatz, the pruned ansatz after system symmetric pruning, structure symmetric pruning, and spatial symmetric pruning, respectively. }}
		\label{fig:TFIM_q68}
	\end{figure*}

	\subsection{Simulation results}\label{subsec:numerical_results}	
	\textbf{EQNTK at initialization.} Fig.~\ref{fig:TFIM_q68}(b) and Fig.~\ref{fig:qaoa_q68}(b) show that each symmetric pruning step in Alg.~\ref{alg:sp} (i.e., the sub-steps 2-1, 2-2, and 2-3 refer to `SP1', `SP2', `SP3', respectively) constantly improves the squared norm of gradients (i.e., the EQNTK value $Q_S$) for both TFIM and MaxCut. This implicates that SP is capable of accelerating convergence of QNNs and alleviating barren plateaus based on Theorem \ref{thm:EQNTK}. Notably, the EQNTK $Q_S$ of the pruned ansatz (labeled by  `SP3') is improved by $20$ (or $14$) times compared to the initial over-parameterized ansatz (labeled by `SP0') in the problem of TFIM (or MaxCut). 
	
	\textbf{Critical point of QNNs.} 
	Fig.~\ref{fig:TFIM_q68}(c) and Fig.~\ref{fig:qaoa_q68}(c) illustrate that when the number of parameters surpasses a threshold, QNNs experience a computational phase transition where the loss value $\mathcal{L}(\bm{\theta}^{(T)})$  at the convergence stage  sharply drops by an order of the magnitude. Moreover, the minimum number of parameters required to reach the over-parameterization regime, highlighted by the `critical point', is dramatically reduced by SP. Specifically, the number of parameters of naive QNNs at the critical point in TFIM (or MaxCut), i.e., labeled by `SP0', scales exponentially with the system size. By contrast, it is gradually reduced from 800 (or 1000) to 300 (or 300) after SP1, then to 120 (or 150) after SP2, and finally to 50 (or 100) after SP3, which scales polynomially with the system size and is resource-friendly for modern quantum devices (see  Appendix~\ref{sec:appdix-numerical} for hardware efficiency analysis).

	\textbf{Convergence of QNNs.} 
	Fig.~\ref{fig:TFIM_q68}(d) and Fig.~\ref{fig:qaoa_q68}(d) reflect that SP dramatically improves the convergence of QNNs. In the common over-parameterization regime of `SP1'--`SP3' (i.e., $LK\ge300$), the total iterations required to achieve $\epsilon$-convergence can be reduced by up to 6000 steps for TFIM and 5000 steps for MaxCut. For the same ansatz design, increasing the number of parameters linearly improves the convergence, which echoes with Theorem~\ref{thm:EQNTK}.

	\textbf{EQNTK and  trainability of QNNs.}
	Fig.~\ref{fig:TFIM_q68} and Fig.~\ref{fig:qaoa_q68} indicate the relation between the EQNTK value and the trainability of QNNs. That is, the convergence rate and the number of parameters around the critical point decrease with the increased EQNTK value. In MaxCut, when the number of parameters $LK\approx 450$ reaches the over-parameterization regime in  the cases of `SP1'--`SP3', the corresponding EQNTK value yields $Q_1: Q_2: Q_3\approx 1:2:4$ and the iteration steps $T(\epsilon)$ follows $T_1:T_2:T_3\approx 4:2:1$ ($T_j$ and $Q_j$ refer to  $T(\epsilon)$ and $Q_S$ in `SP$j$' with $j\in[3]$). A similar phenomenon is observed in the task of TFIM.    These results accord with   Theorem~\ref{thm:EQNTK} such that EQNTK can guide the trainability of QNNs.
	
	\begin{figure} 
		\centering
		\includegraphics[width=0.92\textwidth]{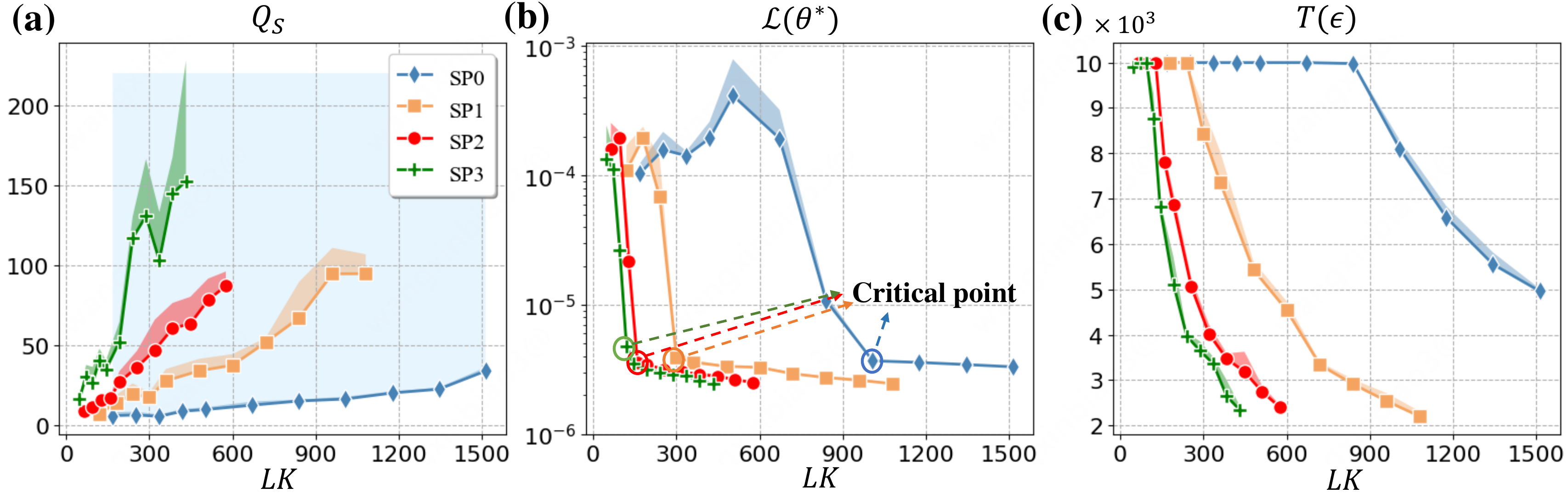}
		\caption{\small{\textbf{Results for MaxCut under SP.} The notations are identical to those  in Fig.~\ref{fig:TFIM_q68}.}}
		\label{fig:qaoa_q68}
	\end{figure}

	\section{Related work}
	Prior literature related to our work can be cast into two categories: the trainability theories of QNNs and the design of symmetric ans\"atze.
	
	\textbf{Trainability of QNNs.}  \citet{mcclean2018barren} first discovered the barren plateau of QNNs. Since then, a line of research is uncovering intrinsic reasons leading to this phenomenon. Current progress has revealed that these reasons include high entanglement of QNNs \citep{marrero2021entanglement}, the used global measurements \citep{cerezo2021cost}, and the presence of noise \citep{wang2021noise}. To mitigate barren plateaus, two popular ways are adopting shallow QNNs with local measurements \citep{uvarov2021barren, pesah2021absence,du2020quantumQAS,zhang2020toward} and correlating  parameters \citep{volkoff2021large}. 
	
	Another crucial line of research is investigating the convergence rate of QNNs. Several empirical studies have observed that over-parameterized QNNs promise faster convergence, and the trained parameters are near-optimal \citep{kiani2020learning, wiersema2020exploring, zhang2020overparametrization}. Afterward, initial attempts have been made to theoretically explain the superiority of over-parameterized QNNs. Specifically, \citet{larocca2021theory} and \citet{anschuetz2021critical} separately leveraged the tools of dynamical Lie algebra and random matrix theory to quantify the critical point of over-parameterized QNNs; \citet{you2022convergence} extended the result of  \citet{xu2018convergence} to the quantum regime and proved the exponential convergence rate of over-parameterized QNNs;  \citet{liu2021representation,liu2022analytic,Liu:2022llk,liu2022laziness} proposed the quantum neural tangent kernel to exhibit an exponential convergence rate of over-parameterized QNNs. A common caveat of prior literature is that their convergence analysis either requires an exponential circuit depth for reaching over-parameterization or omits the factor of the circuit depth. By contrast, EQNTK greatly reconciles the harsh requirement to reach over-parameterization and allows a tighter convergence rate for the symmetric ansatz.
	
	\textbf{Ans\"atze with symmetric properties.}   
	Previous studies focus on unearthing inherent symmetry behind the problem Hamiltonian to design problem-specific ans\"atze. The mainstream approaches contain arranging  the layout of ans\"atze \citep{liu2019variational, seki2020symmetry, gard2020efficient, zheng2021speeding,zheng2021speeding2}, correlating  trainable parameters \citep{shaydulin2021classical, shaydulin2021exploiting, sauvage2022building}, and utilizing results from the geometric deep learning \citep{shaydulin2021classical, shaydulin2021exploiting, sauvage2022building, meyer2022exploiting}, where the symmetry comes from the training data. Compared with asymmetric ans\"atze, these ans\"atze enable better trainability in GSP. However, none of the previous proposals can identify the implicit symmetry of the problem Hamiltonian. Moreover, although there is numerical evidence that symmetric ansätze can accelerate convergence, theoretical analysis is still rare. EQNTK readily compensates these issues, which provides an efficient measure to compare the trainability of various ans\"atze and allows an automatic method (symmetric pruning) to design symmetrical ansatz with fast convergence.

\section{Conclusions}
In this study, we investigate the training performance of QNNs for the GSP problem by developing a novel tool---EQTNK, which is capable of capturing the training dynamics of various ans\"atze via their effective dimension. We prove that a symmetric ansatz design with a small effective dimension enables an improved trainability of QNNs, including alleviating the barren plateaus and reducing the number of parameters and the circuit depth required to reach the over-parameterization regime. Besides, we propose a novel symmetric pruning algorithm to automatically extract the symmetric ansatz from an over-parameterized and asymmetric ansatz. Empirical results confirm the effectiveness of SP. A future research direction is extending the results of EQNTK from GSP to the regime of machine learning and exploring whether over-parameterized QNNs can simultaneously attain good trainability and generalization \citep{abbas2020power, du2022efficient,caro2022generalization}. 

 \section{Acknowledgements}
 Yong Luo was supported by the Special Fund of Hubei Luojia Laboratory under Grant 220100014 and the National Natural Science Foundation of China under Grant 62276195.


	\appendix
	 
	\section{Optimization of QNNs in GSP}\label{sec:appdix_Opt_QNNs}
	In this section, we separately  elaborate the elementary notions in quantum computing, the preliminary of Hamiltonian and the ground state preparation (GSP), and the optimization strategy of QNNs in the task of GSP.
	
	\textbf{Basics of quantum computation.} 
    The elementary unit of quantum computation is qubit (or quantum bit), which is the quantum mechanical analogue of a classical bit. A qubit is a two-level quantum-mechanical system described by a unit  vector in the Hilbert space $\mathbb{C}^2$. In Dirac notation, a qubit state is defined as $\ket{\phi}=c_0\ket{0}+c_1\ket{1}\in \mathbb{C}^2$ where $\ket{0}=[1,0]^{\top}$ and $\ket{1}=[0,1]^T$ specify two unit bases and the coefficients $c_0,c_1\in\mathbb{C}$ yield $|c_0|^2+|c_1|^2=1$. Similarly, the \textit{quantum state} of $n$ qubits is defined as a unit vector in $\mathbb{C}^{2^n}$, i.e., $\ket{\psi}=\sum_{j=1}^{2^n}c_j\ket{e_j}$, where $\ket{e_j}\in \mathbb{R}^{2^n}$ is the computational basis whose $j$-th entry is $1$ and other entries are $0$, and $\sum_{j=1}^{2^n}|c_j|^2=1$ with $c_j \in \mathbb{C}$. Besides Dirac notation, the density matrix can be used to describe more general qubit states. For example, the density matrix of the state $\ket{\psi}$ is $\rho=\ket{\psi}\bra{\psi} \in \mathbb{C}^{2^n \times 2^n}$, where $\bra{\psi}=\ket{\psi}^{\dagger}$ refers to the complex conjugate transpose of $\ket{\psi}$. For a set of qubit states $\{p_j, \ket{\psi_j}\}_{j=1}^m$ with $p_j>0$, $\sum_{j=1}^m p_j=1$, and $\ket{\psi_j}\in \mathbb{C}^{2^n}$ for $j \in [m]$, its density matrix is $\rho=\sum_{j=1}^m p_j \rho_j$ with $\rho_j=\ket{\psi_j}\bra{\psi_j}$ and $\Tr(\rho)=1$.
	
	A \textit{quantum gate} is an unitary operator which can evolve a quantum state $\rho$ to another quantum state $\rho^{\prime}$. Namely, an $n$-qubit gate $U\in\mathcal{U}({2^n})$ obeys $UU^{\dagger}=U^{\dagger}U=I_{2^n}$, where $\mathcal{U}({2^n})$ refers to the unitary group in  dimension $2^n$. Typical single-qubit quantum gates include the Pauli gates, which can be written as Pauli matrices:
		\begin{equation}
			X = \left[ \begin{array}{ccc}
				0 & 1 \\
				1 & 0 \\
			\end{array}
			\right], \quad 
			Y = \left[ \begin{array}{ccc}
				0 & -i \\
				i & 0 \\
			\end{array}
			\right], \quad 
			Z = \left[ \begin{array}{ccc}
				1 & 0 \\
				0 & -1 \\
			\end{array}
			\right]. \quad  \label{eq:pauli}
		\end{equation}
		The more general quantum gates are their corresponding rotation gates $R_X(\theta)=e^{-i\frac{\theta}{2}X}, R_Y(\theta)=e^{-i\frac{\theta}{2}Y}$, and $R_Z(\theta)=e^{-i\frac{\theta}{2}Z}$ with a tunable parameter $\theta$, which can be written in the matrix form as
		\begin{equation}
			R_X(\theta)=\left[\begin{array}{cc}
				\cos \frac{\theta}{2} & -i \sin \frac{\theta}{2} \\
				-i \sin \frac{\theta}{2} & \cos \frac{\theta}{2}
			\end{array}\right], 
			R_Y(\theta)=\left[\begin{array}{cc}
				\cos \frac{\theta}{2} & -\sin \frac{\theta}{2} \\
				\sin \frac{\theta}{2} & \cos \frac{\theta}{2}
			\end{array}\right],  
			R_Z(\theta)=\left[\begin{array}{cc}
				e^{-i \frac{\theta}{2}} & 0 \\
				0 & e^{i \frac{\theta}{2}}
			\end{array}\right]. \label{eq:pauli_rot}
		\end{equation}
		They are equivalent to rotating a tunable angle $\theta$ around $x$, $y$, and $z$ axes of the Bloch sphere, and recovering the Pauli gates $X$, $Y$, and $Z$ when $\theta=\pi$. Moreover, a multi-qubit gate can be either an individual gate (e.g., CNOT gate) or a tensor product of multiple single-qubit gates. 
	
	The \textit{quantum measurement} refers to the procedure of extracting classical information from the quantum state. It is mathematically specified by a Hermitian matrix $H$ called the \textit{observable}. Applying the observable $H$ to the quantum state $\ket{\psi}$ yields a random variable whose expectation value is $\bra{\psi}H\ket{\psi}$. 
	
	\textbf{Hamiltonian and GSP}. 
	In quantum computation, a \textit{Hamiltonian} is a Hermitian matrix that is used to characterize the evolution of a quantum system or as an observable to extract the classical information from the quantum system. Specifically, under the Schr\"odinger equation, a quantum gate has the mathematical form of $U=e^{-itH}$, where $H$ is a Hermitian matrix, called the Hamiltonian of the quantum system, and $t$ refers to the evolution time of the Hamiltonian. Typical single-qubit Hamiltonians include the Pauli matrices defined in Eqn.~(\ref{eq:pauli}). As a result,  the evolution time $t$ refers to the tunable parameter $\theta$ in Eqn.~(\ref{eq:pauli_rot}). Any single-qubit Hamiltonian can be decomposed as the linear combination of Pauli matrices, i.e., $H=a_1I+a_2X+a_3Y+a_4Z$ with $a_j \in \mathbb{C}$. In the same way, a multi-qubit Hamiltonian is denoted by $H=\sum_{j=1}^{4^n}a_jP_j$, where $P_j\in\{I,X,Y,Z\}^{\otimes n}$ is the tensor product of Pauli matrices. In quantum chemistry and quantum many-body physics, the Hermitian matrix describing the quantum system to be solved is denoted as the \textit{problem Hamiltonian}. 
	
	When taking the problem Hamiltonian as the observable, the quantum state $\ket{\psi^*}$ is said to be the \textit{ground state} of problem Hamiltonian $H$ if the expectation value $\bra{\psi^*}H\ket{\psi^*}$ takes the minimum eigenvalue of $H$, which is called the \textit{ground energy}. GSP refers to preparing the ground state of the problem Hamiltonian. A popular protocol for GSP is to employ a parameterized unitary $U(\bm{\theta})$ to prepare a variational quantum state $\ket{\psi(\bm{\theta})}=U(\bm{\theta})\ket{\psi_0}$ with a fixed input state $\ket{\psi_0}$ and then optimize the parameters $\bm{\theta}$ by minimizing a predefined loss function such as the Eqn.~(\ref{eq:vqe_loss}).

	\textbf{Optimization of QNNs.} The optimization of the loss function  $\mathcal{L}(\bm{\theta})$ in Eqn.~(\ref{eq:vqe_loss}) can be completed by gradient-based methods. A plethora of optimizers has been designed to estimate the optimal parameters $\bm{\theta}^*=\min_{\bm{\theta}} \mathcal{L}(\bm{\theta})$. Here we introduce the implementation of the first-order gradient-based optimizer  for self-consistency. Refer to \cite{cerezo2021variational_VQA} for a comprehensive review. 
	
	Based on Eqn.~(\ref{eq:PSA}), the trainable parameters of QNNs are denoted by $\bm{\theta}=(\bm{\theta}_1^{\top}, \cdots, \bm{\theta}_L^{\top})^{\top}$ with $\bm{\theta}_{\ell}=(\theta_{\ell 1}, \cdots, \theta_{\ell K})^T$, where the subscript `$\ell k$' refers to the $k$-th parameter of the $\ell$-th layer $U_{\ell}$ for $\forall k\in[K]$ and $\forall \ell \in [L]$. Recall the loss function takes the form
	\[\mathcal{L}(\bm{\theta}) = \frac{1}{2} \left(\bra{\psi_0}U(\bm{\theta})^{\dagger} H U(\bm{\theta})\ket{\psi_0}  - E_0 \right)^2,\]
	and the corresponding update rule at the $t$-th iteration $\forall t\in[T]$ is 
	\begin{eqnarray}
		&&\bm{\theta}^{(t+1)} \nonumber\\
		= && \bm{\theta}^{(t)}-\eta \frac{\partial \mathcal{L}(\bm{\theta}^{(t)})}{\partial \bm{\theta}} \nonumber\\
		= && \bm{\theta}^{(t)}-\eta \left(\bra{\psi_0}U(\bm{\theta}^{(t)})^{\dagger} H U(\bm{\theta}^{(t)})\ket{\psi_0}  - E_0 \right)\frac{\partial \left(\bra{\psi_0}U(\bm{\theta}^{(t)})^{\dagger} H U(\bm{\theta}^{(t)})\ket{\psi_0}  - E_0 \right)}{\partial \bm{\theta}}, \nonumber
	\end{eqnarray}
	where $\eta$ refers to the learning rate. The derivative in the last equality can be calculated via the parameter shift rule \cite{mitarai2018quantum}. Mathematically, the derivative with respect to the parameter ${\theta}_{\ell k}$ for $\forall \ell\in[L]$ and $\forall k\in[K]$ is 
	\begin{eqnarray}\label{eqn:para_shift}
		&& \frac{\partial \left(\bra{\psi_0}U(\bm{\theta})^{\dagger} H U(\bm{\theta})\ket{\psi_0}  - E_0 \right) } {\partial {\theta}_{\ell k}} \nonumber\\
		= && \frac{1}{2\sin \alpha} \big[\left(\bra{\psi_0}U(\bm{\theta}^+)^{\dagger} H U((\bm{\theta}^+)\ket{\psi_0}  - E_0 \right)  
		- \left(\bra{\psi_0}U((\bm{\theta}^-)^{\dagger} H U((\bm{\theta}^-)\ket{\psi_0}  - E_0 \right)\big]\nonumber,
	\end{eqnarray}
	where $ \bm{\theta}^+ = \bm{\theta} +  \alpha \bm{e}_{\ell k}$, $ \bm{\theta}^- = \bm{\theta} -  \alpha \bm{e}_{\ell k}$, $\bm{e}_{\ell k}$ is the unit vector along the $\theta_{\ell k}$ axis and $\alpha$ can be any real number but the multiple of $\pi$ because of the diverging denominator. 
	
	\section{Equivalent training dynamics under the mean square error loss}\label{sec:appdix_MSE}
	
	In the main text, to facilitate the convergence analysis, the loss function $\mathcal{L}(\bm{\theta})$ in Eqn.~(\ref{eq:vqe_loss}) adopts the term $E_0$, as the ground state energy of the problem Hamiltonian. Here we elucidate how to extend our results to a more general loss function in which $E_0$ is replaced by any $C\in \mathbb{R}$ with $C\le E_0$. More specifically, the general mean square error loss function is defined as 	\begin{equation}\label{eq:general_mse_loss}
		\mathcal{L}(\bm{\theta}, C) = \frac{1}{2} \left(\bra{\psi_0}U(\bm{\theta})^{\dagger} H U(\bm{\theta})\ket{\psi_0}  - C \right)^2 \equiv \frac{1}{2}\varepsilon(\bm{\theta}, C)^2,
	\end{equation}
	where 
	\[\varepsilon(\bm{\theta}, C) = \bra{\psi_0}U(\bm{\theta})^{\dagger} H U(\bm{\theta})\ket{\psi_0}  - C\] refers to the training error associated with $C$. For clarity, we denote the training error at the $t$-th iteration as $\varepsilon_t(\bm{\theta}^{(t)}, C)$. Given two loss functions $\mathcal{L}(\bm{\theta}, C)$ and $\mathcal{L}(\bm{\theta}, C^{\prime})$ with $C, C^{\prime} \le E_0$, their convergence behavior or training dynamics is said to be equivalent if the variational quantum state $\ket{\psi(\bm{\theta})}$ converges to a same quantum state with the same convergence rate. More concretely, for the same initial state $\ket{\psi^{(0)}}$, the evolved state $\ket{\psi^{(t)}}$ at each iteration $t$ for $\forall t\in[T]$ is the same. 
	
	The following lemma indicates the equivalent training dynamic of QNNs under the loss functions $\mathcal{L}(\bm{\theta}, C)$ and $\mathcal{L}(\bm{\theta}, C')$ with $C,C^{\prime}\le E_0$.

	\begin{lemma}\label{lem:equi_training}
		Under the framework of the quantum neural tangent kernel in Eqn.~(\ref{eqn:QNTK}), given any loss function $\mathcal{L}(\bm{\theta},C)$ in Eqn.~(\ref{eq:general_mse_loss}) with $C\le E_0$, QNN obeys the same convergence rate with $\mathcal{L}(\bm{\theta}, E_0)$ and the optimized variational quantum state $\ket{\psi(\bm{\theta})}$ converges to the ground state of the problem Hamiltonian $H$.
	\end{lemma}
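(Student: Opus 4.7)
The plan is to show that the residual dynamics under $\mathcal{L}(\bm{\theta}, C)$ obey the same QNTK recurrence as the one that drives Theorem~\ref{thm:orig-QNTK}, with an identical kernel and therefore the same Haar-averaged exponential decay rate. The crucial observation is that the QNTK in Eqn.~(\ref{eqn:QNTK}) is built from $\nabla \bra{\psi_0} U(\bm{\theta})^{\dagger} H U(\bm{\theta}) \ket{\psi_0}$, which is manifestly independent of the additive constant $C$.

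Concretely, writing $E(\bm{\theta}) = \bra{\psi_0} U(\bm{\theta})^{\dagger} H U(\bm{\theta}) \ket{\psi_0}$, I would first compute $\nabla \mathcal{L}(\bm{\theta}, C) = \varepsilon(\bm{\theta}, C)\,\nabla E(\bm{\theta})$, so that the gradient descent step is $\Delta \bm{\theta}^{(t)} = -\eta\, \varepsilon_t(\bm{\theta}^{(t)}, C)\, \nabla E(\bm{\theta}^{(t)})$. Taylor-expanding the residual at the updated parameters,
$$\varepsilon_{t+1}(\bm{\theta}^{(t+1)}, C) \approx \varepsilon_t(\bm{\theta}^{(t)}, C) + \nabla E(\bm{\theta}^{(t)})^{\top}\Delta \bm{\theta}^{(t)} = (1 - \eta Q_t)\,\varepsilon_t(\bm{\theta}^{(t)}, C),$$
with $Q_t = \nabla E(\bm{\theta}^{(t)})^{\top}\nabla E(\bm{\theta}^{(t)})$ being exactly the QNTK of Eqn.~(\ref{eqn:QNTK}) and free of any $C$-dependence. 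Invoking the Haar-4-design, $LK \gg 1$, and $\eta \ll 1$ hypotheses exactly as in Theorem~\ref{thm:orig-QNTK}, I would replace $Q_t$ by its Haar average $\bar{Q} = \mathcal{O}(LK\,\Tr(H^2)/4^n)$, yielding $\varepsilon_t(\bm{\theta}^{(t)}, C) \approx (1-\eta\bar{Q})^t \varepsilon_0 \approx e^{-\gamma t}\varepsilon_0$ with the same exponent $\gamma = \eta\bar{Q}$ as in the $C = E_0$ case. To pin down the limiting state, I would observe that for every $C \le E_0$ the gradients $\nabla \mathcal{L}(\bm{\theta}, C)$ and $\nabla \mathcal{L}(\bm{\theta}, E_0)$ are parallel (both proportional to $\nabla E(\bm{\theta})$ with nonnegative prefactors under $C \le E_0 \le E(\bm{\theta})$), so gradient descent under either loss traces the same oriented curve in parameter space up to a positive time reparametrization and shares the same fixed points $\nabla E(\bm{\theta}) = 0$; the over-parameterization hypothesis then guarantees that the attracting fixed point is the ground state, i.e., $\ket{\psi(\bm{\theta}^{(t)})} \to \ket{\psi^*}$.

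The main subtlety to address is the apparent tension when $C < E_0$: taken literally, $\varepsilon_t(\bm{\theta}^{(t)}, C) \to 0$ would force $E(\bm{\theta}^{(t)}) \to C$, which the variational principle forbids. The resolution is that $Q_t$ itself vanishes as $\bm{\theta}$ approaches the ground state (since $\nabla E$ vanishes at an extremum), so the lazy regime in which $Q_t \approx \bar{Q}$ is effectively constant is self-limiting and terminates precisely at the plateau $E(\bm{\theta}) = E_0$; the exponential decay law should therefore be read as a first-order description of the approach phase, which is the same regime treated by Theorem~\ref{thm:orig-QNTK}. Combined with the time-reparametrization argument linking the $C$ and $E_0$ trajectories, this yields both the claimed identity of the convergence rate and the identification of the limiting variational state as the ground state.
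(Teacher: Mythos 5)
Your proposal is correct and rests on the same two pillars as the paper's proof: the QNTK of Eqn.~(\ref{eqn:QNTK}) is independent of the additive constant $C$ because only $\nabla\bra{\psi_0}U(\bm{\theta})^{\dagger}HU(\bm{\theta})\ket{\psi_0}$ enters it, and the variational principle then pins the limiting state to the ground state. The bookkeeping differs in one place worth noting. The paper does not linearize the $C$-residual itself; it invokes Theorem~\ref{thm:orig-QNTK} for $\varepsilon(\bm{\theta},E_0)$, substitutes $\varepsilon_t(\bm{\theta}^{(t)},E_0)=\varepsilon_t(\bm{\theta}^{(t)},C)+(C-E_0)$, and reads off that $\varepsilon(\bm{\theta},C)$ decays at rate $\eta Q$ to the floor value $E_0-C$ rather than to zero. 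You instead derive $\varepsilon_{t+1}(\cdot,C)\approx(1-\eta Q_t)\varepsilon_t(\cdot,C)$ directly, which taken literally drives $E(\bm{\theta})$ to $C<E_0$; your explicit resolution of this tension --- that $Q_t$ degenerates as $\nabla E\to 0$ at the ground state, so the lazy-regime description self-terminates at the plateau $E=E_0$ --- is an honest treatment of a point the paper sidesteps by choosing which residual to exponentiate. Your additional observation that the gradients $\nabla\mathcal{L}(\bm{\theta},C)$ and $\nabla\mathcal{L}(\bm{\theta},E_0)$ are parallel with nonnegative ratios, so the two descents trace the same oriented curve up to a positive time reparametrization, is not in the paper but actually supports the paper's stronger informal claim (stated just before the lemma) that the evolved states coincide iteration by iteration; it is a welcome strengthening, with the caveat that exact trajectory coincidence holds only for the continuous-time flow and is approximate for discrete steps with $\eta\ll 1$.
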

	\begin{proof}
		[Proof of Lemma~\ref{lem:equi_training}]
		
		In the same manner with Eqn.~(\ref{eqn:QNTK}), the QNTK of the loss function in  Eqn.~(\ref{eq:general_mse_loss}) can be denoted by \[Q_C = \nabla \varepsilon(\bm{\theta}, C)^{\top} \nabla \varepsilon(\bm{\theta}, C).\] 
		According to the explicit form of $\varepsilon(\bm{\theta}, C)$ in Eqn.~(\ref{eq:general_mse_loss}), the QNTK $Q_C$ and $Q_{C^{\prime}}$ is the same for any given constant $C$ and $C^{\prime}$ as $\nabla \varepsilon(\bm{\theta}, C)=\nabla \varepsilon(\bm{\theta}, C^{\prime})$. For this reason, in the following, we use $Q$ to refer the QNTK with any constant $C$.  
		
		Recall the results of Theorem~\ref{thm:orig-QNTK}, i.e., in the case of $C=E_0$, the training error $\varepsilon(\bm{\theta}, E_0)$ decays as \[\varepsilon_t(\bm{\theta}^{(t)},E_0) \approx e^{-\eta Q t}\varepsilon_0(\bm{\theta}^{(0)},E_0).\] 
		Moreover, due to $\varepsilon_t(\bm{\theta}^{(t)},E_0)= \varepsilon_t(\bm{\theta}^{(t)},C) +(C-E_0)$ for $\forall t\in[T]$, the training error of QNNs under the loss $\mathcal{L}(\bm{\theta},C)$ is 
		\begin{equation}\label{eq:decay_general_mse}
			\varepsilon_t(\bm{\theta}^{(t)},C)+(C-E_0) \approx e^{-\eta Q t} \left(\varepsilon_0(\bm{\theta}^{(t)},C) + (C-E_0)\right).
		\end{equation}
		Since the right-hand side tends to be zero with a sufficiently large number $t$, this suggests $	\varepsilon_t(\bm{\theta}^{(t)},C)+(C-E_0) \approx 0$. In other words,  $\varepsilon(\bm{\theta},C)$ converges to the minimal value of $E_0-C$ with the decay rate $\eta Q$. Supported by the variational principle,  the optimized variational quantum state  $U(\bm{\theta}^{(T)})\ket{\psi_0}$   at the converging stage is exactly the ground state in which the corresponding energy estimates  $E_0$. 
	\end{proof}
	
	\section{Proof of Theorem~\ref{thm:EQNTK}}
	\label{sec:appdix-EQNTK-proof}
	The proof of Theorem~\ref{thm:EQNTK} employs the following two lemmas whose proofs are given in the subsequent two subsections.
	\begin{lemma}[Adapted from \citet{you2022convergence}, Lemma D.1]\label{lem:basic_fact}
		Following Definition \ref{def:eff_dim}, let $\mathcal{U}_{\mathcal{A}}$ be a matrix subgroup of $SU(d)$ where each element in $\mathcal{U}_{\mathcal{A}}$ commutes with a Hermitian matrix $\Sigma$. The corresponding direct decomposition is denoted by $V=\sum_{j=1}^p V_j$ with projection $\Pi_j$. Let $V^*$ be the subspace of interest which includes the input state $\ket{\psi_0}$ and ground state $\ket{\psi^*}$. Denote $\Pi^*=P^{\dagger}P$ as the projection on $V^*$. Then for any Hermitian $W$ and any unitary matrix $U$ in the group $\mathcal{U}_{\mathcal{A}}$, we have
		\begin{equation}
			\Pi^* UWU^{\dagger}\Pi^* = \Pi^* U \Pi \  \Pi^* W \Pi^* \ \Pi^* U^{\dagger} \Pi^*.
		\end{equation}
	\end{lemma}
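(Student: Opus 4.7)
The plan is to exploit the fact that $U$ commutes with $\Sigma$ to show that $U$ (and hence $U^{\dagger}$) also commutes with the orthogonal projector $\Pi^*$ onto the subspace $V^*$, and then insert a $\Pi^*$ around $W$ purely algebraically. Since $\Sigma$ is Hermitian, it admits a spectral decomposition whose spectral projectors are exactly the $\Pi_j$ associated with the direct sum decomposition $V = \oplus_{j=1}^p V_j$; by assumption $V^*$ is one of these invariant subspaces (say $V^* = V_{j^*}$) so $\Pi^* = \Pi_{j^*}$. From $[U,\Sigma]=0$ the functional calculus then yields $[U, \Pi_j]=0$ for every $j$, and in particular $[U,\Pi^*]=0$; the same identity holds for $U^{\dagger}$ by taking adjoints.

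Given this commutation, the plan is to record two basic identities that will be used repeatedly:
\begin{equation}
\Pi^* U = U \Pi^* = \Pi^* U \Pi^*, \qquad \Pi^* U^{\dagger} = U^{\dagger} \Pi^* = \Pi^* U^{\dagger} \Pi^*,
\end{equation}
where in each chain I use $[U,\Pi^*]=0$ and the idempotency $(\Pi^*)^2=\Pi^*$. These reduce the problem to purely symbolic manipulation: starting from the left-hand side $\Pi^* U W U^{\dagger} \Pi^*$, apply the first identity to the left $\Pi^* U$ block and the second identity to the right $U^{\dagger} \Pi^*$ block to rewrite it as $(\Pi^* U \Pi^*) W (\Pi^* U^{\dagger} \Pi^*)$; then insert a further $\Pi^*$ on either side of $W$ by the same idempotency trick, producing the sandwich $\Pi^* W \Pi^*$ in the middle and yielding the claimed identity $\Pi^* U \Pi^* \cdot \Pi^* W \Pi^* \cdot \Pi^* U^{\dagger} \Pi^*$.

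There is no substantial obstacle here: the statement is a linear-algebraic rewriting that is enabled entirely by $[U,\Pi^*]=0$. The only step that requires a little care is justifying that commutation with $\Sigma$ passes to commutation with each spectral projector; I would do this by directly verifying that $U$ maps each eigenspace of $\Sigma$ into itself (since if $\Sigma v = \lambda v$ then $\Sigma(Uv) = U\Sigma v = \lambda Uv$), which is equivalent to $[U,\Pi_j]=0$ for all $j$. The remainder of the argument is formal algebra and produces the identity as stated, which in turn enables the later reduction of quantities like $\Tr(\Pi^* U W U^{\dagger} \Pi^*)$ to expressions depending only on $H^* = \Pi^* H \Pi^*$ and the effective dimension $d_{\eff}$, as required for Theorem~\ref{thm:EQNTK}.
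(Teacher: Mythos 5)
Your proposal is correct and follows essentially the same route as the paper: both arguments rest on the observation that $[U,\Sigma]=0$ forces $U$ to be block-diagonal with respect to the spectral decomposition of $\Sigma$ (the paper phrases this as $\Pi_{j'}U\Pi_j=0$ for $j\neq j'$ and sums over the resolution of identity $\mathbb{I}=\sum_j\Pi_j$, while you phrase it equivalently as $[U,\Pi_j]=0$ for each spectral projector), after which the identity follows from idempotency of $\Pi^*$. Your explicit justification that eigenspace preservation gives $[U,\Pi_j]=0$ is a welcome detail the paper leaves implicit, but the substance of the argument is the same.
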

	
	\begin{lemma}\label{lem:QNTK_proj}
		Following notations in Lemma~\ref{lem:basic_fact}, denote
		\begin{align}
			U_{-,\ell k} \equiv & \prod_{\ell^{\prime}=1}^{\ell-1}U_{\ell^{\prime}}(\bm{\theta}_{\ell^{\prime}})\prod_{k'=1}^{k-1} e^{-i\bm{\theta}_{\ell k'}G_{k'}}, \quad U_{+,\ell k} \equiv
			\prod_{k'=k}^{K} e^{-i\bm{\theta}_{\ell k'}G_{k'}}
			\prod_{\ell^{\prime}=\ell+1}^{L}U_{\ell^{\prime}}(\bm{\theta}_{\ell^{\prime}}),
		\end{align}
		the EQNTK takes the form 
		\begin{equation}
			Q_S = -\sum_{\ell=1}^{L}\sum_{k=1}^{K} \Big<\psi_0^* \Big| (U_{+,\ell k}^*)^{\dagger}\left[G_{k}^*,  (U_{-,\ell k}^*)^{\dagger} H^* U_{-,\ell k}^* \right]U_{+,\ell k}^* \Big|\psi_0^*\Big>^2,
		\end{equation}
		where $\ket{\psi_0^*} = P \ket{\psi^*}$ and $A^* = PAP^{\dagger}$ with $A \in \{U_{+,\ell k}, G_{k}, U_{-,\ell k}, H\}$.
	\end{lemma}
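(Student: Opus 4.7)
The plan is to derive the stated expression in two distinct steps: first, obtain the commutator form of $Q_S = \sum_{\ell,k}(\partial\varepsilon/\partial\theta_{\ell k})^2$ via a direct gradient computation on the full $2^n$-dimensional Hilbert space, and then use the invariance hypothesis of Lemma~\ref{lem:basic_fact} to push every factor down to the effective subspace $V^*$ and re-express it through the starred operators.

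For the first step, I would split the ansatz as $U(\bm{\theta}) = U_{-,\ell k}\cdot U_{+,\ell k}$, noting that the factor $e^{-i\theta_{\ell k}G_k}$ is the leading (leftmost) factor of $U_{+,\ell k}$ and that $U_{-,\ell k}$ is independent of $\theta_{\ell k}$. A routine chain-rule computation then gives $\partial_{\theta_{\ell k}}U = -i\,U_{-,\ell k}G_k U_{+,\ell k}$, so that after applying the product rule to $\bra{\psi_0}U^{\dagger}HU\ket{\psi_0}$ and combining the two terms one obtains
\[
\partial_{\theta_{\ell k}}\varepsilon \;=\; i\,\bra{\psi_0} U_{+,\ell k}^{\dagger}\bigl[G_k,\, U_{-,\ell k}^{\dagger}H\, U_{-,\ell k}\bigr] U_{+,\ell k}\ket{\psi_0}.
\]
Squaring (the $i^2=-1$ producing the overall minus sign) and summing over $(\ell,k)$ then reproduces the claimed commutator sum written in terms of the full, unprojected operators.

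For the second step, I would use that the input state $\ket{\psi_0}$ lies in $V^*$, so sandwiching the summand between two copies of the projector onto $V^*$ leaves it unchanged. Applying Lemma~\ref{lem:basic_fact} iteratively---first to the outer $U_{+,\ell k}^{\dagger}(\cdot)U_{+,\ell k}$ sandwich, next inside the commutator to the $U_{-,\ell k}^{\dagger}H U_{-,\ell k}$ sandwich, and finally to absorb the stray $G_k$ factor---lets me factor projectors through the nested commutator. Using the identity $\Pi A \Pi = P(P^{\dagger}AP)P^{\dagger} = P A^{*}P^{\dagger}$ between consecutive projectors and collapsing the resulting $P^{\dagger}P = I_{d_{\eff}}$ matrices yields the claimed formula, with $\ket{\psi_0^*}$ playing the role of the representation of $\ket{\psi_0}$ in the $V^*$ basis.

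The main obstacle is bookkeeping in the second step. One must verify that each operator being projected genuinely belongs to (or is built from generators of) $\mathcal{U}_{\mathcal{A}}$ so that Lemma~\ref{lem:basic_fact} applies---this is automatic for the $U_{\pm,\ell k}$ since they are partial products of ansatz factors, and it extends to $G_k$ by viewing $e^{-i\theta G_k}$ as an element of $\mathcal{U}_{\mathcal{A}}$ and taking a derivative at $\theta=0$. A second subtlety is that the commutator of two Hermitians is anti-Hermitian, so when invoking the lemma with $W = [G_k,\,U_{-,\ell k}^{\dagger}HU_{-,\ell k}]$ one should either absorb the factor of $i$ to make $W$ Hermitian or verify directly that the lemma's proof does not actually require Hermiticity of $W$. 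Once these points are settled, the rest is a mechanical substitution that matches dimensions and collapses $P^{\dagger}P$ products.
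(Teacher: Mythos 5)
Your proposal matches the paper's proof: the paper likewise obtains the commutator form of $Q_S$ from the gradient of $\varepsilon(\bm{\theta})$, rewrites each summand as a trace against $\rho_0=\ket{\psi_0}\bra{\psi_0}$, inserts $\Pi^*$ using $\ket{\psi_0}\in V^*$ and $(\Pi^*)^2=\Pi^*$, and pushes the projectors through $U_{\pm,\ell k}$, $G_k$, and $H$ via Lemma~\ref{lem:basic_fact} before collapsing $P^{\dagger}P$. Your two flagged subtleties (applying the lemma to the generator $G_k$ and to the anti-Hermitian commutator as $W$) are real but harmless, since the paper's proof of Lemma~\ref{lem:basic_fact} uses only block-diagonality of $U$ and $\mathbb{I}=\sum_j\Pi_j$, never Hermiticity of $W$, and the $G_k$ step follows exactly as you describe.
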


	\begin{proof}
		[Proof of Theorem~\ref{thm:EQNTK}] 
		Following the gradient descent optimizer in Eqn.~(\ref{eq:vqe_loss}) with the learning rate $\eta \le 1$, the change of the training error of QNN can be expressed as
		\begin{equation}	
			\dj \varepsilon=\sum_{\ell,k}\frac{\partial \varepsilon}{\partial \bm{\theta}_{\ell k}}\dj \bm{\theta}_{\ell k} =-\eta \sum_{\ell,k}\frac{\partial \varepsilon}{\partial \bm{\theta}_{\ell k}}\frac{\partial \varepsilon}{\partial \bm{\theta}_{\ell k}} \varepsilon =-\eta Q_S \varepsilon.
		\end{equation}
		where $\dj \varepsilon = \varepsilon_{t+1}-\varepsilon_t$ and $\dj \bm{\theta}_{\ell k} = \bm{\theta}_{\ell k}^{(t+1)}-\bm{\theta}_{\ell k}^{(t)}$, the second equality comes from the update rule with $\dj \bm{\theta}_{\ell k} = \eta\varepsilon \partial \varepsilon/\partial \bm{\theta}_{\ell k} $,  and the third equality uses the definition of QNTK in Eqn.~(\ref{eqn:QNTK}).
		
		Following the results in \citet[Theorem~1]{liu2021representation}, when the EQNTK value $Q_S^{(t)}$ is a constant, the training error decays with 
		\begin{equation}\label{append:thm2:eqn1}
			\varepsilon_t \approx (1-\eta {Q_S^{(t)}})^t\varepsilon_0 \approx e^{-\eta {Q_S^{(t)}} t} \varepsilon_0,
		\end{equation}
		which guarantees an exponential convergence towards the global optima. To this end, the proof of Theorem~\ref{thm:EQNTK} amounts to proving that when the number of parameters satisfies $|\bm{\theta}|=LK\gg 1$, the EQNTK can be regarded as a constant. This can be achieved by deriving an analytical solution of $Q_S^{(t)}$ on average as well as the fluctuations around the average for all iterations.

		Following the above explanations, we next analyze the average of $Q_S^{(t)}$. When no confusion arises, the superscript $(t)$ of $Q_S^{(t)}$ and $U(\bm{\theta}^{(t)})$  are dropped in the subsequent analysis. By leveraging  Lemmas~\ref{lem:basic_fact} and \ref{lem:QNTK_proj}, the Haar average of EQNTK yields
		\begin{align}
			\bar{Q}_S & = -\sum_{\ell=1}^{L}\sum_{k=1}^{K} 
			\int dU_{+,\ell k} dU_{-,\ell k} 
			\Big<\psi_0^* \Big| (U_{+,\ell k}^*)^{\dagger}\left[G_{k}^*,  (U_{-,\ell k}^*)^{\dagger} H^* U_{-,\ell k}^* \right]U_{+,\ell k}^* \Big|\psi_0^*\Big>^2, \nonumber \\
			& = -\sum_{\ell=1}^{L}\sum_{k=1}^{K} 
			\int dU_{+,\ell k}^* dU_{-,\ell k}^* 
			\Tr\left( \rho_0^* (U_{+,\ell k}^*)^{\dagger} M_{-,\ell k} U_{+,\ell k}^* \rho_0^* (U_{+,\ell k}^*)^{\dagger} M_{-,\ell k} U_{+,\ell k}^* \right), \nonumber \\
			& = -\sum_{\ell=1}^{L}\sum_{k=1}^{K} \int dU_{+,\ell k}^*
			\Bigg(\frac{\Tr^{2}\left(M_{-, \ell k}\right) \Tr \left((\rho_{0}^{*})^2\right)}{d_{\eff}^{2}-1}+\frac{\Tr \left( (M_{-, \ell k})^2\right) \Tr \left((\rho_{0}^{*})^2\right)}{d_{\eff}^{2}-1}\nonumber \\
			&+\frac{\Tr^2 \left(M_{-, \ell k}\right) \Tr^{2}\left(\rho_{0}^{*}\right)}{d_{\eff}-d_{\eff}^{3}}+\frac{\Tr\left((M_{-, \ell k})^2\right) \Tr\left((\rho_{0}^*)^{2}\right)}{d_{\eff}-d_{\eff}^{3}} \Bigg) \nonumber \\
			& = -\sum_{\ell=1}^{L}\sum_{k=1}^{K} \int dU_{+,\ell k}^* \frac{\Tr \left( (M_{-, \ell k})^2 \right)}{d_{\eff}^{2}+d_{\eff}} \nonumber \\
			& = -\sum_{\ell=1}^{L}\sum_{k=1}^{K}\int dU_{+,\ell k}^* \frac{2\Tr\left( \left(G_{k}^* (U_{-,\ell k}^*)^{\dagger} H^* U_{-,\ell k}^*   \right)^2 \right)-2\Tr\left( (G_{k}^*)^2 ((U_{-,\ell k}^*)^{\dagger} H^* U_{-,\ell k}^*)^2 \right)}{d_{\eff}^{2}+d_{\eff}} \nonumber \\
			& = \frac{2}{d_{\eff}^{2}+d_{\eff}}\left( \frac{d_{\eff}\Tr((H^*)^2- \Tr^2(H^*))}{d_{\eff}^2-1} \right)\Tr\left(L\sum_{k=1}^{K}(G_{k}^*)^2 \right) \nonumber \\
			& \approx \frac{LK\Tr\left((H^*\right)^2)}{d_{\eff}^2}.
		\end{align}
		where the second equality employs the assumption such that $U_{-,\ell k}^*$ and $U_{+,\ell k}^*$ match the Haar distribution on the group $SU(d_{\eff})$ up to the second moment,  $\rho_0^* = \ket{\psi_0^*} \bra{\psi_0^*} $ is the projection of $\rho_0=\ket{\psi_0} \bra{\psi_0}$ on the subspace $V^*$, and $M_{-, \ell k} =[G_{k}^*,  (U_{-,\ell k}^*)^{\dagger} H^* U_{-,\ell k}^* ] $, the third equality exploits the \textbf{RTNI} package \citep{fukuda2019rtni} to calculate the integration with respect to the Haar measure,  the fourth equality uses the fact $\Tr((\rho_0^*)^2)=\Tr(\rho_0^*)=1$ with $\rho_0^*$ being the pure state, the fifth equality utilizes $\Tr([A,B]^2)=2\Tr(ABAB)-2\Tr(A^2B^2)$, the last second equality uses the \textbf{RTNI} package again, and the last equality utilizes the fact 
		\begin{equation}
			\Tr((G_{k}^*)^2) = \frac{\Tr(G_{k}\Pi^*G_{k}\Pi^*)}{\sum_{j=1}^p\Tr(G_{k}\Pi_jG_{k}\Pi_j) } \Tr(G_{k}^2) \approx \frac{d_{\eff}}{2^n} \cdot 2^n = d_{\eff}
		\end{equation}
		with $\Tr(G_{k}^2)=\Tr(\mathbb{I})=2^n$.
		
		The fluctuation of EQNTK can be expressed as  $\Delta Q_S^2 = E(Q_S^2) - \bar{Q}_S^2$, i.e.,
		\begin{align}
			\Delta Q_S^{2} = & 2 \sum_{\ell_{1},k_1<\ell_{2}, k_2} \int d U_{+, \ell_{1}k_1}^* d U_{+, \ell_{2}k_2}^* d U_{-, \ell_{1}k_1}^* d U_{-, \ell_{2}k_2}^* \times \nonumber \\	
			&\left(\begin{array}{c}\Tr\left(\rho_{0}^* (U_{+, \ell_1 k_1}^*)^{\dagger}M_{-,\ell_1 k_1}
				(U_{+,\ell_1 k_1}^*)^{\dagger}\rho_0^* U_{+,\ell_1 k_1}^* M_{-,\ell_1 k_1}U_{+, \ell_1 k_1}^* \right) \times \\ 
				\Tr\left(\rho_{0}^* (U_{+, \ell_2 k_2}^*)^{\dagger}M_{-,\ell_2 k_2}
				(U_{+,\ell_2 k_2}^*)^{\dagger} \rho_0^* U_{+,\ell_2 k_2}^*M_{-,\ell_2 k_2}U_{+, \ell_2 k_2}^* \right)
			\end{array}\right) \nonumber \\
			&+\sum_{\ell,k} \int d U_{+, \ell k}^*  d U_{-, \ell k}^* 
			\left(\begin{array}{c}\Tr\left(\rho_{0}^* (U_{+, \ell k}^*)^{\dagger}M_{-,\ell k}
				(U_{+,\ell k}^*)^{\dagger}\rho_0^* U_{+,\ell k}^* M_{-,\ell k}U_{+, \ell k}^*\right) \times  \\ 
				\Tr\left(\rho_{0}^* (U_{+, \ell k}^*)^{\dagger}M_{-,\ell k}
				(U_{+,\ell k}^*)^{\dagger}\rho_0^* U_{+,\ell k}^* M_{-,\ell k}U_{+, \ell k}^* \right)
			\end{array}\right)  \nonumber \\
			& - \bar{Q}_S^2 \nonumber \\
			= & \frac{LK}{d_{\eff}^4}\left( 8\Tr^2((H^*)^2) + 12\Tr((H^*)^4) \right) + \nonumber \\
			& \frac{LK}{d_{\eff}^5} \bigg( 16\Tr((H^*)^2) \Tr^2(H^*)+ 48\Tr((H^*)^3)\Tr(H^*) + 40\Tr^2((H^*)^2) + \cdots \nonumber \\
			\approx	 &  \frac{LK}{d_{\eff}^4}\left( 8\Tr^2((H^*)^2) + 12\Tr((H^*)^4) \right),
		\end{align}
		where $\ell_1, k_1 < \ell_2, k_2$ refers to $\ell_1 K+ k_1 < \ell_2 K+ k_2$, the derivation of the second equality mainly follows the results  of \citet[Appendix~C]{liu2022analytic}, and the approximation comes from the truncation and the preservation of the leading order term.
		
		Taken together, when the number of parameters $LK \gg 1$ such that $Q_S/\Delta Q_S \approx \frac{1}{\sqrt{LK}} \ll 1 $, the EQNTK can be viewed as a constant and Eqn.~(\ref{append:thm2:eqn1}) is satisfied.		
	\end{proof}

	\subsection{Proof of Lemma~\ref{lem:basic_fact}}
	\begin{proof}
		[Proof of Lemma~\ref{lem:basic_fact}]
		The two conditions in the lemma, i.e., (i)  any unitary $U$ in $\mathcal{U}_{\mathcal{A}}$ commutes with the Hermitian matrix $\Sigma$ and (ii) $\Sigma$ leads to the decomposition $V=\oplus_{j=1}^p V_j$ with projection $\Pi_{j}$, imply that $U$ is block-diagonal under the projection $\left\{\Pi_{j}\right\}_{j=1}^{p}$. In other words, we have $\Pi_{j^{\prime}} {U} \Pi_{j}=0$ for $j \neq j^{\prime}$ and $\forall U  \in \mathcal{U}_{\mathcal{A}}$. This observation gives the following results, i.e.,
		\begin{align}
			& \Pi^* UWU^{\dagger}{\Pi}^* \nonumber    \\
			=& \Pi^* U \sum_{j=1}^{p} \Pi_{j} W \sum_{j^{\prime}=1}^{p} \Pi_{j^{\prime}} {U}^{\dagger} {\Pi}^* \nonumber    \\
			=& \sum_{j, j^{\prime} \in[p]}\left(\Pi^* U \Pi_{j}\right) W \left(\Pi_{j^{\prime}} U^{\dagger} \Pi^*\right)  \nonumber   \\
			=& \Pi^* U \Pi^* W \Pi^* \ {U}^{\dagger} \Pi^*  \nonumber   \\
			=& \Pi^* U \Pi^* \Pi^* W \Pi^* \ \Pi^* {U}^{\dagger} {\Pi}^*   ,
		\end{align}
		where the first equality employs the fact of $\mathbb{I}=\sum_{j=1}^{p} \Pi_{j}$  and the last equality uses the property of projections $\Pi_{j}^{2}=\Pi_{j}$.
	\end{proof}
	
	\subsection{Proof of Lemma~\ref{lem:QNTK_proj}}
	\begin{proof}
		[Proof of Lemma~\ref{lem:QNTK_proj}]
		The explicit form of the training error $\varepsilon(\bm{\theta})=\bra{\psi_0}U(\bm{\theta})^{\dagger}HU(\bm{\theta})\ket{\psi_0}-E_0$  leads to the explicit form of QNTK, i.e., 
		\begin{align}
			Q & =(\nabla \varepsilon(\bm{\theta}))^{\top}  \nabla \varepsilon(\bm{\theta}) \nonumber \\
			& = -\sum_{\ell=1}^{L}\sum_{k=1}^{K} \Big<\psi_0 \Big| U_{+,\ell k}^{\dagger}\left[G_{k},  U_{-,\ell k}^{\dagger} H U_{-,\ell k} \right]U_{+,\ell k} \Big|\psi_0\Big>^2.
		\end{align}
		Similarly, for the symmetric ansatz $U(\bm{\theta})$ with the projection $\Pi^*=PP^{\dagger}$, the EQNTK yields  
		\begin{align}
			Q_S =& -\sum_{\ell=1}^{L}\sum_{k=1}^{K} \Tr\left(\Big|\psi_0\Big>\Big<\psi_0 \Big| U_{+,\ell k}^{\dagger}\left[G_{k},U_{-,\ell k}^{\dagger}HU_{-,\ell k} \right]U_{+,\ell k} 
			\right)^2
			\nonumber \\
			=& -\sum_{\ell=1}^{L}\sum_{k=1}^{K} \Tr\left( \Pi^* \rho_0 \Pi^* U_{+,\ell k}^{\dagger}\left[G_{k}, U_{-,\ell k}^{\dagger}HU_{-,\ell k} \right]U_{+,\ell k} 
			\right)^2
			\nonumber \\
			=& -\sum_{\ell=1}^{L}\sum_{k=1}^{K} \Tr\left( \Pi^* \rho_0 \Pi^*  U_{+,\ell k}^{\dagger}\left[G_{k}, U_{-,\ell k}^{\dagger}OU_{-,\ell k} \right]U_{+,\ell k} \Pi^*
			\right)^2
			\nonumber \\
			=& -\sum_{\ell=1}^{L}\sum_{k=1}^{K} \Tr\left( \Pi^* \rho_0 \Pi^*  U_{+,\ell k}^{\dagger}\Pi^*\left[G_{k}, U_{-,\ell k}^{\dagger}OU_{-,\ell k} \right] \Pi^* U_{+,\ell k} \Pi^*
			\right)^2
			\nonumber \\
			=& -\sum_{\ell=1}^{L}\sum_{k=1}^{K} \Tr\left( \Pi^* \rho_0 \Pi^*  U_{+,\ell k}^{\dagger}\Pi^*\left[ \Pi^* G_{k}\Pi^* , \Pi^* U_{-,\ell k}^{\dagger} \Pi^* H \Pi^* U_{-,\ell k} \Pi^*  \right] \Pi^* U_{+,\ell k} \Pi^*
			\right)^2
			\nonumber \\
			=& -\sum_{\ell=1}^{L}\sum_{k=1}^{K} \Tr\left( P^{\dagger} \rho_0 P P^{\dagger} U_{+,\ell k}^{\dagger}P \left[ P^{\dagger} G_{k}P ,  P^{\dagger} U_{-,\ell k}^{\dagger} P P^{\dagger} H P P^{\dagger} U_{-,\ell k} P  \right] P^{\dagger} U_{+,\ell k} P
			\right)^2
			\nonumber \\
			=& -\sum_{\ell=1}^{L}\sum_{k=1}^{K} \Tr\left(\rho_0^* (U_{+,\ell k}^{*})^{\dagger}\left[G_{k}^{*}, (U_{-,\ell k}^{*})^{\dagger}H^{*}U_{-,\ell k}^{*} \right]U_{+,\ell k}^{*}
			\right)^2
			\nonumber \\
			=& -\sum_{\ell=1}^{L}\sum_{k=1}^{K} \Big<\psi_0^{*} \Big| (U_{+,\ell k}^{*})^{\dagger} \left[G_{k}^{*}, (U_{-,\ell k}^{*})^{\dagger}H^{*}U_{-,\ell k}^{*} \right]U_{+,\ell k}^{*} \Big|\psi_0^{*}\Big>^2,  
			\label{eq:ntk_effective}
		\end{align}
		where the second equality utilizes the  assumption such that $\ket{\psi_0}$ lies in $V^*$, the third, fourth, and fifth equalities employ the property of projection operator $(\Pi^*)^2=\Pi^*$ and Lemma~\ref{lem:basic_fact}, the final equality follows from the definitions with $\ket{\psi_{0}^{*}}=P^{\dagger}\ket{\psi_{0}}$ and $A^*=P^{\dagger}AP$ with $A\in \{U_{-,\ell k},  U_{+,\ell k}, G_{k}, H, \rho_0\}$ .       
	\end{proof}
	
	\section{Proof of Proposition~\ref{lem:DLA_EQTNK}}\label{sec:appdix-DLA_EQNTK}
	Before moving to elaborate on the proof of Proposition~\ref{lem:DLA_EQTNK}, we first briefly review the definition of dynamical Lie algebra (DLA).
	
	\begin{definition}
		[Definition~3, \citet{larocca2021diagnosing}] Given an ansatz design $\mathcal{A}$, the dynamical Lie algebra (DLA) $\mathfrak{g}$ is generated by the repeated nested commutators of the operators in $\mathcal{A}$. That is
		\begin{equation}
			\mathfrak{g}=\SPAN\braket{iG_1, \cdots, iG_K}_{Lie}
		\end{equation}
		where $\Span\braket{S}_{Lie}$ denotes the Lie closure, i.e., the set obtained by repeatedly taking the commutator of the elements in $S$.
	\end{definition}
	
	The proof of Lemma~\ref{lem:DLA_EQTNK} employs the following Lemma.
	\begin{lemma}\label{lem:DLA_U}
		Consider the QNN instance $(\ket{\psi_0}, U(\bm{\theta}), H)$ whose DLA is $\mathfrak{g}$. If there exists an invariant subspace $V_\mathfrak{g}$ including the input state $\ket{\psi_0}$ and the ground state $\ket{\psi^*}$ with dimension $d_{\mathfrak{g}}$ under $\mathfrak{g}$, then the effective dimension of this ansatz design $\mathcal{A}$ yields $d_{\eff}=d_{\mathfrak{g}}$.
	\end{lemma}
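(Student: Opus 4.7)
The plan is to establish a two-way correspondence between $\mathfrak{g}$-invariant subspaces and subspaces invariant under the ansatz-generated family $\mathcal{U}_{\mathcal{A}}$, so that the minimal invariant subspace containing $\ket{\psi_0}$ and $\ket{\psi^*}$ is the same from both perspectives. This is the natural Lie-theoretic bridge: infinitesimal invariance under a Lie algebra $\mathfrak{g}$ is equivalent to invariance under its connected Lie group $e^{\mathfrak{g}}$, and by construction $\mathcal{U}_{\mathcal{A}}$ is densely contained in $e^{\mathfrak{g}}$.

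First I would verify the forward direction $d_{\eff} \leq d_{\mathfrak{g}}$. Because $V_{\mathfrak{g}}$ is invariant under every $G \in \mathfrak{g}$, a power-series expansion of the exponential gives $e^{-itG}V_{\mathfrak{g}} \subseteq V_{\mathfrak{g}}$ for every $t \in \mathbb{R}$. Since each $U(\bm{\theta})$ in Eqn.~(\ref{eq:PSA}) is a product of factors $e^{-i\bm{\theta}_{\ell k}G_k}$ with $G_k \in \mathcal{A} \subset \mathfrak{g}$, it follows that $U(\bm{\theta})V_{\mathfrak{g}} \subseteq V_{\mathfrak{g}}$ for all $\bm{\theta}$. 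Combined with $\ket{\psi_0}, \ket{\psi^*} \in V_{\mathfrak{g}}$, this shows $V_{\mathfrak{g}}$ qualifies as an invariant subspace in the sense of Definition~\ref{def:eff_dim}, whence $d_{\eff} \leq d_{\mathfrak{g}}$.

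Next I would establish the reverse inequality $d_{\eff} \geq d_{\mathfrak{g}}$, interpreting $V_{\mathfrak{g}}$ as the minimal $\mathfrak{g}$-invariant subspace containing $\ket{\psi_0}$ and $\ket{\psi^*}$. Let $V^*$ be the effective subspace from Definition~\ref{def:eff_dim}. Differentiation $-iG_k = \frac{d}{dt}e^{-it G_k}\big|_{t=0}$ together with $e^{-itG_k} V^* \subseteq V^*$ shows $G_k V^* \subseteq V^*$ for each $G_k \in \mathcal{A}$. Iteratively taking commutators preserves invariance, so $V^*$ is closed under the entire Lie closure $\mathfrak{g} = \SPAN\braket{iG_1,\cdots,iG_K}_{\mathrm{Lie}}$. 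Minimality of $V_{\mathfrak{g}}$ then yields $V_{\mathfrak{g}} \subseteq V^*$, i.e.\ $d_{\mathfrak{g}} \leq d_{\eff}$.

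The main obstacle lies in rigorously bridging the formally constructed ansatz family $\mathcal{U}_{\mathcal{A}}$ (finite products of exponentials of generators in $\mathcal{A}$) with the full Lie group $e^{\mathfrak{g}}$: to conclude that $V^*$ must be invariant under the whole DLA, one really needs the over-parameterization assumption $LK \gg 1$, so that $\mathcal{U}_{\mathcal{A}}$ is dense in $e^{\mathfrak{g}}$ via the Baker--Campbell--Hausdorff expansion. A further delicate point is that the statement tacitly requires $V_{\mathfrak{g}}$ to be the minimal DLA-invariant subspace containing the input and ground states; without this implicit minimality one only obtains $d_{\eff} \leq d_{\mathfrak{g}}$, and the equality advertised in the lemma degenerates to an inequality.
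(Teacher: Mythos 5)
Your proof is correct, but it takes a genuinely different route from the paper's. The paper first proves the group-level identity $\mathcal{U}_{\mathcal{A}} = \{e^{V} : V \in \mathfrak{g}\}$ via the Baker--Campbell--Hausdorff expansion (both inclusions), and then argues that elements of $\mathcal{U}_{\mathcal{A}}$ and of $\mathfrak{g}$ commute with the same Hermitian $\Sigma$ and hence induce the same invariant subspaces. You instead work directly at the level of invariant subspaces and prove two inequalities: $\mathfrak{g}$-invariance of $V_{\mathfrak{g}}$ passes to each factor $e^{-i\theta_{\ell k}G_k}$ by the power series and hence to every $U(\bm{\theta})$, giving $d_{\eff}\le d_{\mathfrak{g}}$; conversely, differentiating the one-parameter families $e^{-itG_k}$ (which lie in $\mathcal{U}_{\mathcal{A}}$ already at $L=1$) shows each $G_k$ preserves $V^*$, and closure of the preserved subspace under linear combinations and nested commutators extends this to all of $\mathfrak{g}$, giving $d_{\mathfrak{g}}\le d_{\eff}$. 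Your route is more elementary and sidesteps the convergence and surjectivity caveats of the BCH argument entirely, at the cost of not delivering the stronger group identity the paper records along the way. Two remarks on your closing paragraph: (i) the worry that the reverse inclusion needs $LK\gg1$ or density of $\mathcal{U}_{\mathcal{A}}$ in $e^{\mathfrak{g}}$ is unfounded for \emph{your} argument --- the differentiation-plus-commutator step only requires the single-generator circuits, which are in $\mathcal{U}_{\mathcal{A}}$ by definition; density is only needed if one insists on the paper's set equality. (ii) Your observation that the equality $d_{\eff}=d_{\mathfrak{g}}$ tacitly requires both $V^*$ and $V_{\mathfrak{g}}$ to be the \emph{minimal} invariant subspaces containing $\ket{\psi_0}$ and $\ket{\psi^*}$ is a fair and worthwhile point; the paper's Definition~\ref{def:eff_dim} and its proof gloss over this in the same way, so your version is, if anything, the more careful one.
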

	\begin{proof}
		[Proof of Lemma~\ref{lem:DLA_U}]
		
		We first demonstrate the equivalence between the group $\mathcal{U}_{\mathcal{A}}= \cup_{L=0}^{\infty}\{ U(\bm{\theta}): \bm{\theta}\in\mathbb{R}^{LK} \}$ and the group generated by the elements in $\mathfrak{g}$, i.e.,
		\begin{equation}\label{eq:equiv_DLA}
			\mathcal{U}_{\mathcal{A}}  				= \{e^V, V\in \mathfrak{g} \}.
		\end{equation} 
		
		\textit{\underline{$\mathcal{U}_{\mathcal{A}} \Rightarrow \{e^V, V\in \mathfrak{g} \}$}}. To facilitate understanding, we consider a single-layer unitary $U(\bm{\theta})$ with $L=1$ and the ansatz design $\mathcal{A}=\{G_1, G_2\}$. From the Baker-Campbell-Hausdorff formula, we have
		\begin{equation}\label{eq:BCH}
			U(\bm{\theta}) = e^{i\theta_1 G_1} e^{i\theta_2 G_2} = e^{J_1(\bm{\theta})},
		\end{equation}
		where 
		\begin{equation}\label{eq:BCH_expansion}
			J_1(\bm{\theta}) = i\left( \bm{\theta}_1G_1 + \bm{\theta}_2G_2 +
			\frac{i \bm{\theta}_1 \bm{\theta}_2}{2}[G_1, G_2] - \frac{\bm{\theta}_1^2 \bm{\theta}_2}{12}[G_1,[G_1, G_2]] + \cdots
			\right).
		\end{equation}
		Eqn.~(\ref{eq:BCH_expansion}) implies that by merging $e^{i\bm{\theta}_1G_1}$ and $e^{i\bm{\theta}_2G_2}$ into a single term, the new evolution is generated by an operator $J_1(\bm{\theta})$ depending on both $\bm{\theta}_1$ and $\bm{\theta}_2$, which contains a nested commutator between $G_1$ and $G_2$. Therefore, we have $J_1(\bm{\theta}) \in \mathfrak{g}$ and $U(\bm{\theta}) \in \{e^V, V\in \mathfrak{g} \}$. 
		
		For the case of multiple layers, i.e., $U(\bm{\theta})=\prod_{\ell =1}^{L} e^{i\bm{\theta}_{\ell_1}G_1}e^{i\bm{\theta}_{\ell_2}G_2}$, we have
		$U(\bm{\theta})=e^{J_L(\bm{\theta})}\in \{e^V, V\in\mathfrak{g}\}$ by recursively applying the Baker-Campbell-Hausdorff formula to reformulate $U(\bm{\theta})$ by the  $J_L(\bm{\theta}) \in \mathfrak{g}$. 
		
		\textit{\underline{$\mathcal{U}_{\mathcal{A}} \Leftarrow \{e^V, V\in \mathfrak{g} \}$}}.	
		Since each element in $\mathfrak{g}$ is a linear combination of the nested commutators in Eqn.~(\ref{eq:BCH_expansion}), there always exists $\bm{\theta} \in \mathbb{R}^{2L}$ for any $V \in \mathfrak{g}$ such that $J_L(\bm{\theta})=V$ and thus $e^{J_L(\bm{\theta})}=U(\bm{\theta}) \in \cup_{L=0}^{\infty}\{U(\bm{\theta}): \bm{\theta}\in \mathbb{R}^{LK} \}$.

		Taken together, we obtain Eqn.~(\ref{eq:equiv_DLA}) in the case of $K=2$. The results for the ansatz design $\mathcal{A}$ with more than two elements can be derived in the same manner. More details can be found in Section IV of \citet{larocca2021theory}.

		The equivalence of $\mathcal{U}_{\mathcal{A}}$ and $\{e^G: G\in \mathfrak{g} \}$ indicates that for any $G\in \mathfrak{g}$ and $U \in \mathcal{U}_{\mathcal{A}}$, $G$ and $U$ commutes with the same Hermitian matrix $\Sigma$ since $U$ can be expressed as $e^G$ and hence  has the same Eigen-space with $G$. This implies that the invariant subspace induced by $\mathcal{U}_{\mathcal{A}}$ is the same with the one induced by $\mathfrak{g}$ and thus $d=d_{\mathfrak{g}}$.
	\end{proof}

	\begin{proof}
		[Proof of Proposition~\ref{lem:DLA_EQTNK}]
		Following  Lemma~\ref{lem:DLA_U}, the DLA-based EQNTK is the same as the EQNTK discussed in Theorem~\ref{thm:EQNTK} because the corresponding $U(\bm{\theta})$ induces the same invariant subspace. Hence, the results achieved in Theorem~\ref{thm:EQNTK} can be applied to the DLA-based EQNTK by replacing the effective dimension $d_{\eff}$ with $d_{\mathfrak{g}}$.
	\end{proof}
	
	\section{Implementation details of the symmetric pruning algorithm}\label{sec:appdix-SP}
	In this section, we elucidate Steps 2-1, 2-2, and 2-3 of the proposed SP in Alg.~\ref{alg:sp}. Recall the considered problem Hamiltonian is expressed as  $\widetilde{H}=H \otimes \mathbb{I}^{\otimes{m}}$ with $H=\sum_{j=1}^q \alpha_j H_j$, where $\alpha_j$ is the real coefficient and $H_j$ is the tensor product of Pauli matrices on $n$ qubits. A symmetry $S$ of a Hamiltonian $\widetilde{H}$ is a unitary operator leaving $\widetilde{H}$ invariant, i.e.,
	\begin{equation}\label{eqn:appendD:symmetry}
		S\widetilde{H}S^{\dagger}=\widetilde{H}.
	\end{equation}
	All of these symmetries form a symmetry group $\mathcal{S}$ where for any two symmetries $S_1, S_2\in \mathcal{S}$, their compositions $S_1 \circ S_2$ or $S_2 \circ S_1$ and their inverses $S_1^{-1}$ and $S_2^{-1}$ are also symmetries in  $\mathcal{S}$. In SP,  these symmetries are classified  into three categories, namely, the system symmetry (Step 2-1), the structure symmetry (Step 2-2), and the spatial symmetry (Step 2-3). Suppose that the initialized asymmetric ansatz is $U(\bm{\theta})$, SP adopts the following methods to tailor this ansatz to obey the above symmetries.  
	
	\textbf{System symmetry.} System symmetry considers the symmetry on qubit wires. Specifically, since the problem Hamiltonian $\widetilde{H}$ can be decomposed into a tensor product of Pauli terms, the symmetry condition in Eqn.~(\ref{eqn:appendD:symmetry}) holds for any unitary of the form $S_{sys}=\mathbb{I}^{\otimes n}\otimes U$, where $U$ is an arbitrary unitary in $SU(2^m)$. All such unitaries are called the \textit{system symmetry} and form a \textit{subgroup} of the symmetry group $\mathcal{S}$, i.e., \[\mathcal{S}_{sys}=\{S_{sys}=\mathbb{I}^{\otimes n}\otimes U: U \in SU(2^m)\}.\] 
	The system symmetry of a unitary $V$ can be recognized if $S_{sys}V S_{sys}^{\dagger}=V$. With this regard, SP assigns the system symmetry to $U(\bm{\theta})$ by removing the redundant parameterized gates and the two-qubit gates associated with the last $m$ qubit wires. In doing so, the pruned ansatz has the form  \[U_{\pr}(\bm{\theta})= U_1(\bm{\theta})\otimes \mathbb{I}^{\otimes m},\] which yields $S_{sys}(U_1(\bm{\theta})\otimes \mathbb{I}^{\otimes m})S_{sys}^{\dagger}=U_1(\bm{\theta})\otimes \mathbb{I}^{\otimes m}$, where $U_1(\bm{\theta})$ is the unitary  extracted from $U(\bm{\theta})$ (the gates applied on the first $n$ qubit wires).
	
	\textbf{Structure symmetry.} The structure symmetry $S_{str}$ refers to the symmetry for the effective Hamiltonian $H$, which satisfies \[S_{str}H S_{str}^{\dagger} = H.\] Moreover, an ansatz $V(\bm{\theta})$ is said to be structure symmetric to the problem Hamiltonian $H$ if there exists a non-trivial symmetry $S_{str}$ (i.e., not the identity operation) and $\bm{\theta} \in \Theta \backslash \{\bm{0}\}$ such that $S_{str}V(\bm{\theta}) S_{str}^{\dagger} = V(\bm{\theta})$. A feasible solution of constructing the structure symmetric ansatz is restricting the corresponding ansatz design that only contains the Pauli terms of $H$. Given the pruned ansatz $U_{\pr}$ returned by Step 2-1, SP (Step 2-2) assigns the structure symmetry on it by removing specific the single-qubit gates and the two-qubit gates so that the pruned ansatz design follows $\mathcal{A}=\{H_1, \cdots, H_q\}$. The tailored ansatz returned by Step 2-2 coincides with HVA, i.e.,  $U_1(\bm{\theta})$ transforms to the new ansatz whose $\ell$-th layer is expressed as $\prod_{j=1}^qe^{-i\bm{\theta}_{\ell j}H_j}$ for $\forall l\in[L]$.

	\textbf{Spatial symmetry.} Spatial symmetry is a discrete symmetry considering the permutation invariance for the sites of the problem Hamiltonian, which tightly relates to the problem of graph automorphism that has been widely studied in graph theory. For this reason, here we introduce the spatial symmetry from the graphical perspective and elucidate the implementations of Step 2-3 in Alg.~\ref{alg:sp}. The key in this step is leveraging the algorithms developed in graph theory to automatically identify the spatial symmetry of problem Hamiltonians. 
	
	From the graphical view,  an $n$-qubit Hamiltonian $H$ refers to a graph $G=(V,E)$ with $n$ vertices, where the $j$-th node $v_j\in V$ represents the $j$-th site (qubit) of $H$ and the edge $E_{i,j}\in E$ characterizes the interaction strength of the $i$-th and the $j$-th sites (qubits). This graph can further be described by an adjacency matrix $D$.
	
	Recall that a spatial symmetry $\pi$ of a Hamiltonian $H$ is a permutation over the sites  leaving $H$ invariant, i.e., $\pi H \pi^{-1}=H$ (or equivalently $[\pi, H]=0$). In other words, the spatial symmetry  $\pi$ preserves the topology invariance of $G$ such that for any $(u,v)\in E$, we have 
	\[ (\pi(v),\pi(u))\in E,~\text{and}~ \pi D \pi^{-1}=D.\]
	In GSP, the action of $\pi$ on an $n$-qubit state $\ket{\psi} \to \pi \ket{\psi}$ means permuting  the indices of qubits. For instance, a permutation $\pi$ with $\pi(1)=3, \pi(2)=1, \pi(3)=2$ acting on the state $\ket{\psi_1}\ket{\psi_2}\ket{\psi_3}$ yields $\pi(\ket{\psi_1}\ket{\psi_2}\ket{\psi_3}) = \ket{\psi_3}\ket{\psi_1}\ket{\psi_2}$. All these permutations form a discrete group of symmetries $S_n$  with the cardinality $O(n!)$.  Particularly, the spatial symmetries of the Hamiltonian is the automorphism group of its corresponding graph, defined as 
	\[Aut(H)=\{\pi_a \in S_n | \pi_a H \pi_a^{-1} =H\},\] 
	or equivalently $Aut(H)=\{\pi_a \in S_n | \pi_a D \pi_a^{-1} =D\}$. The qubits (or qubit-pairs) in the ansatz corresponding to the nodes (or edges) that can be swapped are called \textit{equivalent qubits} (or \textit{qubit-pairs}). More precisely, for any node (or edges) $u \in V$ (or $(u,v) \in E$), if there exists $\pi \in Aut(H)$ such that $\pi(u)=x$ (or $\pi(u,v)=(x,y)$), then the qubits (or qubit-pairs) corresponding to the node (or edge) $u$ (or $(u,v)$) and $x$ (or $(x,y)$) are called equivalent qubits (or qubit-pairs). Given the ansatz returned by Step 2-2, Step 2-3 assigns the spatial symmetry on it by correlating the single-qubit parameterized gates on the equivalent qubits or the two-qubit parameterized gates on the equivalent qubit-pairs.  
	
	\textbf{The flexibility of SP}. The automorphism group for the graphs corresponding to the Hamiltonians with the complicated topological structure is hard to compute manually. In this work, we employ $\textit{nauty}$ to automatically recognize the automorphism group of graph corresponding to the Hamiltonian \textit{nauty} \citep{mckay1981practical}. Besides \textit{nauty} , there are many heuristic algorithms to compute the automorphism group, including  \textit{Traces} \citep{mckay2014practical}, \textit{saucy} \citep{darga2004exploiting}, \textit{Bliss} \citep{junttila2007engineering} and \textit{canauto} \citep{lopez2014novel}. All of them can be easily integrated into SP. Moreover, these heuristic algorithms are capable of solving most graphs for up to tens of thousands of nodes in less than a second \citep{mckay2014practical}.

	\section{The limitations of applying classical pruning methods to QNNs}
	\label{sec:appdix-compare-SP-CP}
	Although both SP in Alg.~\ref{alg:sp} and the classical pruning techniques  distill a smaller network (or an ansatz) from an over-parameterized one in the view of algorithmic implementation, the latter cannot be directly employed to enhance the power of QNNs. 
	
	Recall that  a common feature of classical pruning methods is scoring each parameter or network element and then removing those accompanied with low scores. Such scores generally correspond to the magnitude of parameters \citep{frankle2018lottery}, the gradient of parameters \citep{lee2018snip, wang2020picking}, and the Hessian matrix \citep{lecun1989optimal, hassibi1992second} at the initialization stage or the phase of terminal. Unfortunately,  \citet{cerezo2021higher} proved that the gradient information in QNNs with random deep ansatz  exponentially vanishes with the increased number of qubits. In other words, the gradient information fails to provide any useful information to guide pruning. Meanwhile, the output of QNNs can be regarded as a periodic function of parameters \citep{schuld2021effect}, which forbids   employing the parameters' magnitude as the metric to guide the pruning.  Therefore, it is inappropriate to straightforwardly apply  classical pruning methods to QNNs, where the extracted ansatz may not promise the enhanced trainability.   
	
	In contrast with classical pruning methods,  our proposal does not require any gradient information to construct the symmetric ans\"atze. Instead, it removes the redundant gates and shrinks the solution space according to the information of the problem Hamiltonian.  
		
	\section{More numerical simulation details}\label{sec:appdix-numerical}
	\begin{figure*}
		\centering
		\includegraphics[width=0.96\textwidth]{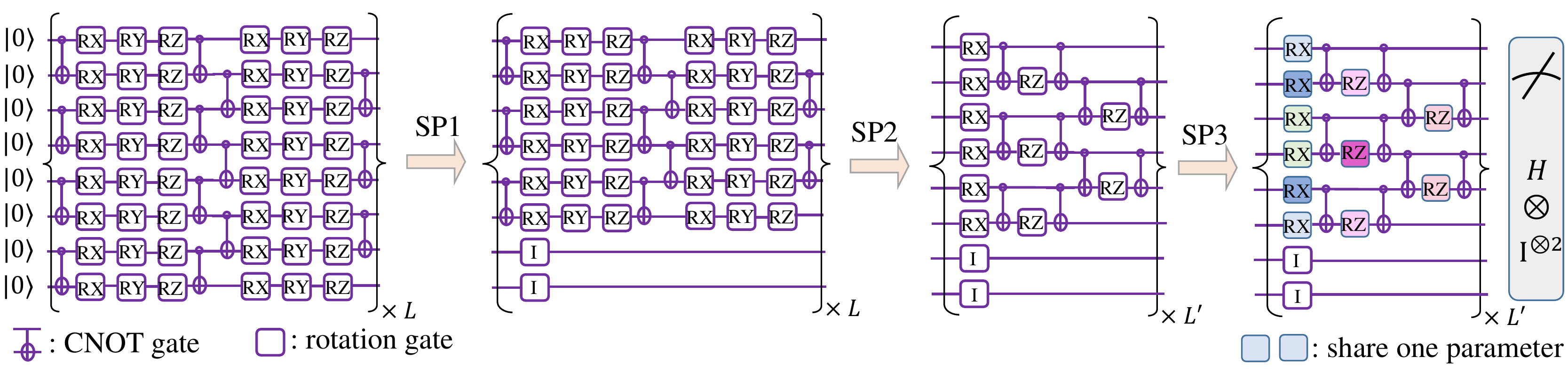}
		\caption{\small{\textbf{Evolution of ansatz structure during symmetric pruning} From left to right shows the initial hardware efficient ansatz and ansatz structure at different stages of symmetric pruning, where `SP1', `SP2', `SP3' refer to the sub-steps 2-1, 2-2, and 2-3 in Alg.~\ref{alg:sp} respectively and $L^{\prime}<L$. The symbol `RX' (`RY', `RZ') refers to the single qubit rotation around the $x$ ($y,z$)-axis and $I$ refers to the identity gate. The rotation gates with the same color of the pruned ansatz are correlated by one individual parameter per layer.}}
		\label{fig:pruned-ansatz}
	\end{figure*}	
	
	In this section, we provide the simulation details omitted in the main text. 
	
	\textbf{Hardware efficient ansatz}. As shown in the left panel of Fig.~\ref{fig:pruned-ansatz}, HEA yields the layer-stacking structure following Eqn.~(\ref{eq:PSA}), where each layer consists of multiple single-qubit Pauli rotation gates and fixed two-qubit CNOT gates. In our numerical simulations, the $\ell$-th layer of the employed HEA takes the form 
	\begin{align}
		U_{\ell}(\bm{\theta})=& U_{ent}^{(1)}\prod_{j=1}^n R_{X}(\bm{\theta}_{j,1}^{\ell})R_{Y}(\bm{\theta}_{j,2}^{\ell})R_{Z}(\bm{\theta}_{j,3}^{\ell})U_{ent}^{(1)} \nonumber  \\
		& \times U_{ent}^{(2)}\prod_{j=1}^n R_{X}(\bm{\theta}_{j,4}^{\ell})R_{Y}(\bm{\theta}_{j,5}^{\ell})R_{Z}(\bm{\theta}_{j,6}^{\ell})U_{ent}^{(2)} \label{eqn:appdix:HEA}
	\end{align}
	where $R_{\mu}(\bm{\theta}_{j,k}^{\ell})=e^{-i\bm{\theta}_{j,k}^{\ell}\mu}$ with $\mu \in \{X,Y,Z\}$ denotes the parameterized single-qubit gate, and $U_{ent}^{(1)} = \prod_{j=1}^{\lfloor \frac{n}{2} \rfloor } \CNOT_{2j-1,2j}$ and $U_{ent}^{(2)} = \otimes_{j=1}^{\lfloor \frac{n-1}{2} \rfloor} \CNOT_{2j,2j+1}$ refer to the entangled layers with $\lfloor a \rfloor$ being the greatest integer no larger than $a$. 
	
	\begin{figure*}
		\centering
		\includegraphics[width=0.96\textwidth]{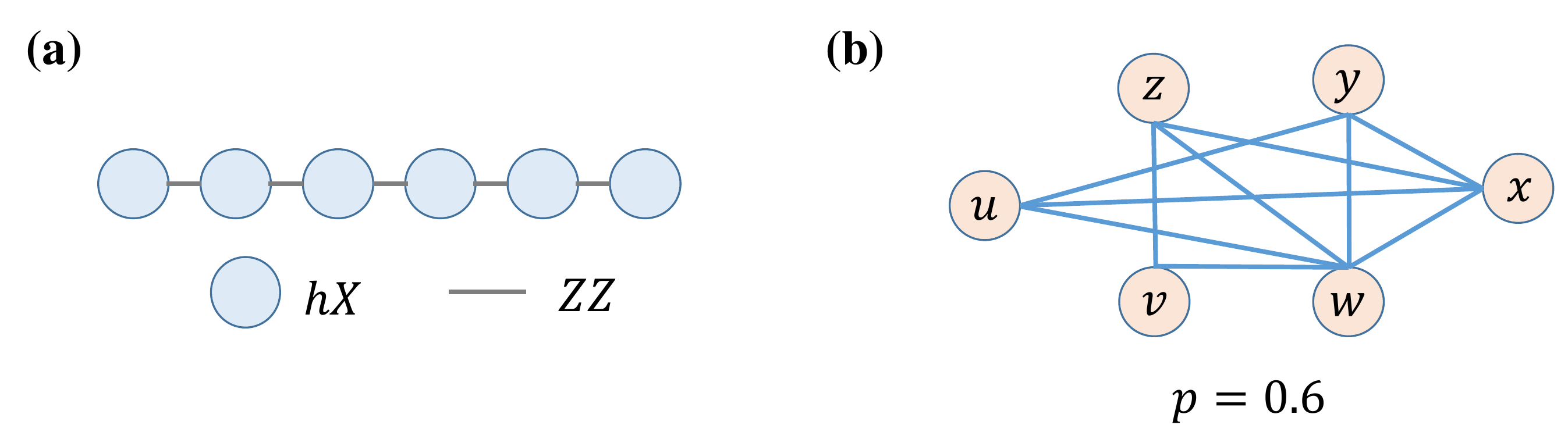}
		\caption{\small{\textbf{Graph representation of problem Hamiltonian.} The left panel and the right panel depict the graph representation of the TFIM model and Erdos-Renyi graph with $p=0.6$, respectively.}}
		\label{fig:graph}
	\end{figure*}	
	
	\textbf{Transverse-field Ising model}. A central problem in quantum many-body physics is predicting the properties of these quantum systems from the first principles of quantum mechanics. Transverse-field Ising model (TFIM) has been employed to explore many interesting quantum systems. In our numerical simulation, we employ an $n$-qubit Hamiltonian of 1D TFIM with an open boundary condition, i.e., $H_{\TFIM}=-\sum_{j=1}^{n-1}\sigma_j^{z}\sigma_{j+1}^{z}- \sum_{j=1}^n \sigma_{j}^{x}$, where $\sigma_j^{\mu}$ denotes the $\mu$-Pauli matrix (with $\mu=x, z$) acting on the $j$-th qubit. The effective dimension for HVA under this Hamiltonian is given by $d_{\eff}=n^2$ \citep{larocca2021diagnosing}. The Hamiltonian is graphically depicted in Fig.~\ref{fig:graph}(a).
	
	\textbf{MaxCut}. Although many important problems in statistical physics and operation research \citep{wheeler2004investigation} can be formulated as MaxCut, finding the optimal solution of MaxCut has been proven to be NP-hard \citep{karp1972reducibility} and quantum computers are expected to attain better approximated solutions than those of classical computers \citep{farhi2014quantum,zhou2022qaoa}.In this work, we consider the MaxCut problem of the Erdos-Renyi graphs whose topology is less structured. An Erdos-Renyi (ER) graph on the vertex set V is a random graph in which each pair of nodes $(u,v)$ connects independently with probability $p$. Fig.~\ref{fig:graph}(b) shows the instance of the ER graph used in the numerical simulation with setting $p=0.6$.

	\textbf{Evolution of ans\"atze}. Here we present the evolution of the ansatz structure for the transverse-field Ising model during symmetric pruning in Fig.~\ref{fig:pruned-ansatz}, which serves as an example for better understanding the learning dynamics of our proposal. Specifically, we adopt the Hardware efficient ansatz as the initial over-parameterized ansatz, as shown in the left side of Fig.~\ref{fig:pruned-ansatz}. The gates on the last two wires corresponding to $\mathbb{I}^{\otimes 2}$ are first removed through Step 2-1 (referred to `SP1') in Alg.~\ref{alg:sp} to ensure the system symmetry. Subsequently, in Step 2-2, SP employs the symmetric information of problem Hamiltonian $H_{\TFIM}$ to remove the parameterized single-qubit gates and two-qubit gates on the first six wires such that the pruned ansatz design is $\mathcal{A}_{pr}=\{\sigma_1^x, \cdots, \sigma_6^x, \sigma_1^z \sigma_{2}^z, \cdots \sigma_5^z \sigma_{6}^z \}$. Finally, in Step 2-3, the spatial symmetry pruning correlates the parameterized gates on the equivalent qubits and qubit-pairs through the returned automorphism by the package \textit{nauty}. In the case of TIFM, the package \textit{nauty} returns a non-trivial automorphism $\pi(j)=n+1-j$ that permutes qubits from each side of the chain. This operation leads to a reduction of the number of free parameters from $11$ for the ansatz pruned by `SP2' to $6$ for the ansatz returned by `SP3'.
	
	\begin{figure*}
		\centering
		\includegraphics[width=0.96\textwidth]{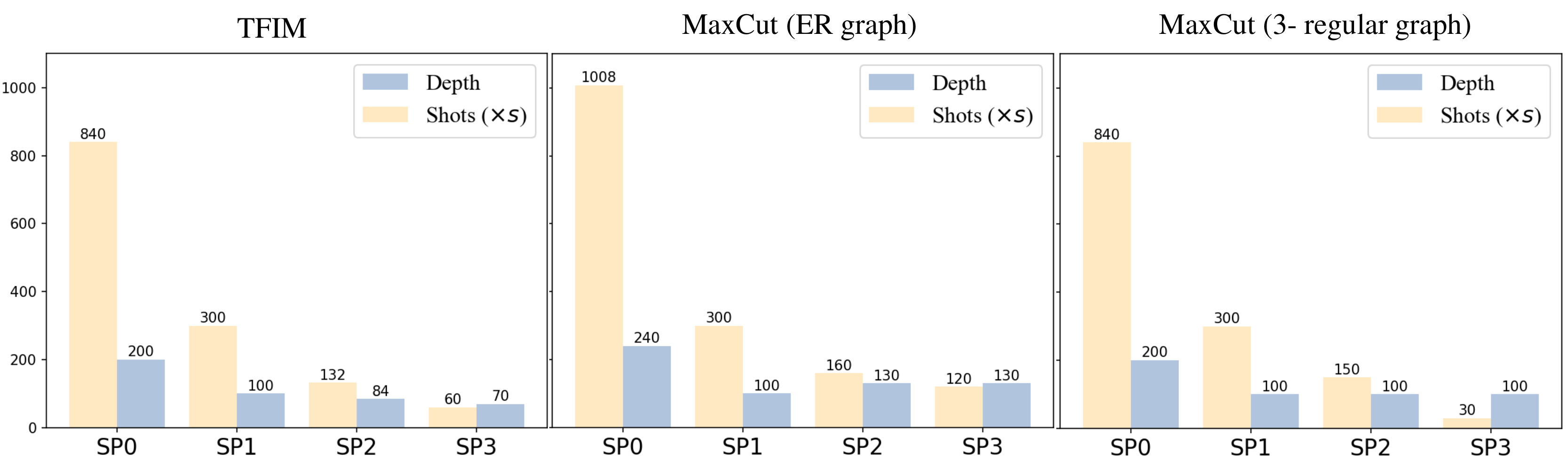}
		\caption{\small{\textbf{The quantum resource required for achieving $\epsilon$-convergence.} The left panel, the middle panel, and the right panel depict the number of measurements  required to complete one optimization step and the circuit depth required to achieve the $\epsilon$-convergence in the task of TFIM, MaxCut for Erdos-Renyi graph, and MaxCut for 3-regular graph, respectively. Particularly, the label `$\times s$' refers that the total number of shots $M$ is the product of the displayed value in the histogram and the number of shots for updating a single parameter $s$. The labels `SP0'-`SP3' refer to the initial ansatz, the pruned ansatz after system symmetric pruning, structure symmetric pruning, and spatial symmetric pruning, respectively.}}
		\label{fig:hardware_efficiency}
	\end{figure*}
	
	\textbf{Hardware efficiency analysis.} We use two standard metrics to quantify the hardware efficiency, i.e., (1) the number of measurements (shots) $M$ required to complete one optimization step; (2) the circuit depth $l(\epsilon)$ required to reach the over-parameterization criteria characterized by $\epsilon$-convergence with $\epsilon=10^{-5}$. Namely, the first metric concerns the runtime cost in training QNNs, and the second metric evaluates the required quantum resources to construct the quantum circuit. To facilitate the comparison of various ans\"atze under the first metric, the number of shots for a single parameter update is set to be  $s$, so that the total number of measurements taken to complete one optimization step linearly scales with the number of parameters, i.e., $M=LKs$. As depicted in Figure~\ref{fig:hardware_efficiency}, in the task of TFIM, compared with the initial over-parameterized asymmetric ansatz (labeled by `SP0'), the required number of measurements $M$ for the pruned ansatz can be dramatically reduced by 14 times; meanwhile, the required circuit depth is reduced by 3 times. In the task of MaxCut, the hardware efficiency improvement is problem-dependent. In particular, the required circuit depth is reduced by about 1.8 times and 2 times for the Erdos-Renyi graph and the 3-regular graph, respectively. For the Erdos-Renyi graph, compared to the ansatz returned by `SP1', the required circuit depth of the ansatz returned by `SP2' slightly increases from 100 to 130. This increase originates from the fact that the Erdos-Renyi graph with a large number of edges requires a relatively deep circuit depth to construct the symmetric ansatz after SP1. Although the circuit depth is subtly increased, an evident benefit is a dramatic reduction of the number of measurements $M$, i.e., compared with the initial ansatz, the required $M$ for the pruned ansatz is reduced by 10 times than that of the initial ansatz. The achieved numerical results confirm that the symmetric ansatz output by our proposal can effectively improve hardware efficiency. As such, it can simultaneously reduce the required quantum resource for reaching the regime of over-parameterization, enable an efficient implementation on NISQ devices, and more importantly, improve the convergence rate so as to reduce the number of access to the quantum devices.

	\begin{figure*}
		\centering
		\includegraphics[width=0.65\textwidth]{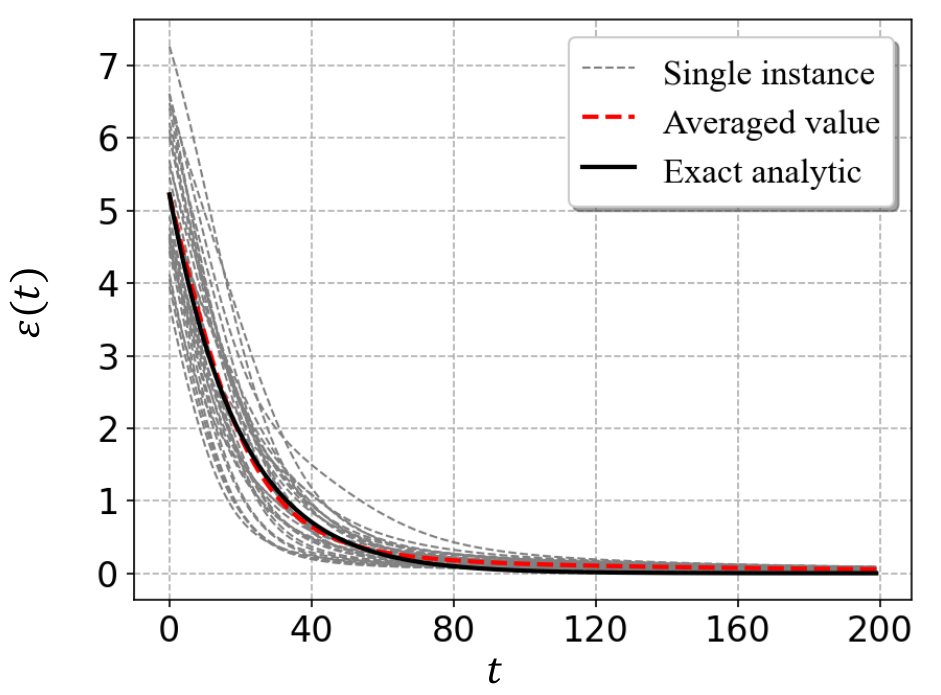}
		\caption{\small{\textbf{Residual training error $\varepsilon$ versus the gradient descent steps $t$.} The black dotted curve, black solid curve, and red dotted curve correspond to the training dynamics of $\varepsilon(t)$ for 30 different initializations, the theoretical prediction for the average dynamics of $\varepsilon(t)$, and the numerical values for the averaged $\varepsilon(t)$, respectively.}}
		\label{fig:training_dynamics}
	\end{figure*}

	\textbf{Training dynamics analysis of symmetric ansatz.} Here we numerically exhibit that EQNTK has the ability to capture the training dynamics of QNNs with symmetrical ansatz. The hyperparameter settings are as follows. In the task of TFIM, the number of qubits of the Hamiltonian is set as $n=6$. We employ the QNN with symmetric ans\"atze processed by SP with the number of layers $L=80$ to optimize the loss function defined in Eqn.~(\ref{eq:vqe_loss}). The learning rate $\eta$ and the maximum number of iteration $T$ is set as $10^{-4}$ and $1000$, respectively. Figure~\ref{fig:training_dynamics} plots the theoretically predicted residual training error according to Theorem~\ref{thm:EQNTK}, the practical residual training error $\varepsilon$ with 30 independent random initializations, and their average versus the gradient descent optimization steps. The numerical results show that the residual error $\varepsilon$ decays exponentially, which echos with the training dynamics derived in Theorem~\ref{thm:EQNTK}.

\end{document}